\newtheorem{theorem}{Theorem}
\newtheorem{corollary}{Corollary}
\newtheorem{lemma}{Lemma}
\begin{document}

\title{  Modeling and Coverage Analysis of $K$-Tier Integrated Satellite-Terrestrial Downlink Networks}

\author{Jungbin Yim,~\IEEEmembership{Student Member,~IEEE,} Jeonghun Park,~\IEEEmembership{Member,~IEEE,} and Namyoon Lee,~\IEEEmembership{Senior Member,~IEEE}, 



\thanks{J. Yim is with School of Electrical Engineering, POSTECH, South Korea (email:{\texttt{jungbinyim@postech.ac.kr}}), J. Park is with School of Electrical and Electronic Engineering, Yonsei University, South Korea (e-mail:{\texttt{jhpark@yonsei.ac.kr}}), N. Lee is with Department of Electrical Engineering, Korea University, South Korea (e-mail:{\texttt{namyoon@korea.ac.kr}}).  
}}

\maketitle

\begin{abstract}
Integrated satellite-terrestrial networks (ISTNs) can significantly expand network coverage while diminishing reliance on terrestrial infrastructure. Despite the enticing potential of ISTNs, there is no comprehensive mathematical performance analysis framework for these emerging networks. In this paper, we introduce a tractable approach to analyze the downlink coverage performance of multi-tier ISTNs, where each network tier operates with orthogonal frequency bands. The proposed approach is to model the spatial distribution of cellular and satellite base stations using homogeneous Poisson point processes arranged on concentric spheres with varying radii. Central to our analysis is a displacement principle that transforms base station locations on different spheres into projected rings while preserving the distance distribution to the typical user. By incorporating the effects of Shadowed-Rician fading on satellite channels and employing orthogonal frequency bands, we derive analytical expressions for coverage in the integrated networks while keeping full generality. Our primary discovery is that network performance reaches its maximum when selecting the optimal density ratio of users associated with the network according to the density and the channel parameters of each network. Through simulations, we validate the precision of our derived expressions.
\end{abstract}

\begin{IEEEkeywords}
Integrated satellite-terrestrial network, stochastic geometry, displacement principle, and coverage.
\end{IEEEkeywords}

\section{Introduction}

 Satellite communication and networking in low earth orbit (LEO) is experiencing rapid growth, presenting substantial opportunities for facilitating novel and inventive applications. It achieves this by offering high-speed Internet connectivity with low delays on a global scale. The extensive deployment of LEO satellites by major industry players like Starlink and OneWeb has made LEO satellite network access widely accessible around the world \cite{del2019technical}.  These satellite networks complement traditional terrestrial networks, especially in areas where terrestrial coverage is limited or cost-prohibitive \cite{guidotti2017satellite,cioni2018satellite,guidotti2019architectures,lin2021path}. There has recently been significant interest in integrating cellular and satellite networks to attain synergistic gains in expanded coverage and transmission rate enhancement. For instance, the integrated satellite-terrestrial networks (ISTNs), as explored in the 3rd generation partnership project (3GPP), offer the potential for seamless global broadband access and ensure continuous service, making it a prominent development in next-generation communication networks, as in \cite{yang20196g,zhang20196g,qian2020integrated,qian2021multi,5gamericas2022NTN,al2022next}.




Characterizing coverage and transmission rate enhancements in ISTNs is crucial for optimal network design. Despite its significance, establishing these improvements with respect to crucial network parameters, including network densities, channel fading effects, and spectrum availability, remains an open challenge. While system-level simulations are essential approaches for ISTN analysis and design, there is a growing imperative for a complementary analytical approach to facilitate benchmarking and comparison. Such analysis sheds light on critical dependencies within the system and offers guidance on network optimization. Stochastic geometry serves as a mathematical tool to offer manageable expressions for coverage in various wireless network types, including ad-hoc \cite{baccelli2006aloha}, device-to-device \cite{lee2014power,al2016stochastic,chun2017stochastic}, cellular \cite{andrews2011tractable,dhillon2012modeling,jo2012heterogeneous,huang2012analytical,lee2014spectral,bai2014coverage,di2015stochastic,park2016optimal,park2018inter}, {\color{black} aerial \cite{wang2023resident}}, and satellite networks \cite{talgat2020stochastic,okati2020downlink,okati2022nonhomogeneous,al2021analytic,al2021optimal,na2021performance,park2022tractable}.

Continuing the success of utilizing stochastic geometry, we aim to introduce an analytical framework for characterizing downlink coverage expression in ISTNs, particularly when utilizing orthogonal frequency bands across networks. Through this analytical framework, we  provide insights into network design principles that facilitate the realization of synergistic advantages within ISTNs.

\vspace{-0.2cm}


\subsection{Prior Works}

A popular approach to modeling cellular and satellite networks is deterministic network topology. The two-dimensional hexagonal grid model is one of the most commonly utilized and accepted cellular network models. In addition, this grid model serves as the foundation for system-level simulations, even though performing detailed analytical analyses is generally challenging \cite{gilhousen1991capacity,viterbi1994other}. However, it is essential to note that both the scalability and accuracy of the grid model can be questionable in scenarios characterized by network heterogeneity.

With the advent of heterogeneous cellular networks (HCNs), which are large and difficult to know or predict the locations and channels of all nodes, the amount of uncertainty has increased significantly, making analysis difficult \cite{haenggi2009stochastic}. Therefore, a less conventional approach involves permitting the base station (BS) locations to be determined by a stochastic point process, {\color{black} as referenced in \cite{baccelli2006aloha,lee2014power,al2016stochastic,chun2017stochastic,andrews2011tractable,dhillon2012modeling,jo2012heterogeneous,huang2012analytical,lee2014spectral,bai2014coverage,di2015stochastic,park2016optimal,park2018inter,wang2023resident,talgat2020stochastic,okati2020downlink,okati2022nonhomogeneous,al2021analytic,al2021optimal,na2021performance,park2022tractable}}. While this model may appear logical for femtocells, which are deployed in unpredictable and unplanned positions, it might raise skepticism when applied to the higher network tiers that are typically centrally orchestrated. However, as depicted in \cite{andrews2011tractable}, the disparity between randomly chosen locations and planned positions may not be as substantial as initially anticipated, even in the case of tier 1 macro BSs. Also, the results in \cite{andrews2011tractable} have demonstrated that in a single-tier network, where BS locations are drawn from a Poisson Point Process (PPP), this resulting network model is nearly as accurate as the standard grid model when compared to a real 4G network. Importantly, this model allows the application of valuable mathematical tools derived from stochastic geometry \cite{chiu2013stochastic}, enabling the development of a manageable analytical model that remains precise.

Recently, like terrestrial networks, satellite networks are also gaining momentum in tractable analysis using stochastic geometry \cite{talgat2020stochastic,okati2020downlink,okati2022nonhomogeneous,al2021analytic,al2021optimal,park2022tractable}. {\color{black} The motivation of applying stochastic geometry to analyze network coverage in LEO mega-constellations where it is difficult to predict the locations and channels of all nodes is reasonable \cite{wang2022ultra}. To support this motivation, the authors in \cite{wang2022evaluating} attempted to verify the reasonableness of random modeling by investigating the accuracy of modeling based on stochastic geometry for satellite networks}. There are two major point processes used in satellite networks: binomial point process (BPP) \cite{talgat2020stochastic,okati2020downlink,okati2022nonhomogeneous,na2021performance} and PPP \cite{al2021analytic,al2021optimal,na2021performance,park2022tractable}. BPP is a special case of PPP because the number of satellites is fixed. On the other hand, in PPP, the number of satellites is also determined by probability, which is an element of additional randomness. Comparing the results of \cite{okati2020downlink} and \cite{park2022tractable}, the results of PPP are more tractable.

As such, coverage analysis of terrestrial and satellite networks using stochastic geometry has been studied extensively, but integrating the two networks is challenging due to the heterogeneity in the network geometries. To be specific, cellular networks are typically represented in a two-dimensional plane, whereas satellite networks are modeled on a spherical shell. Most previous studies have ignored the distinction in homogeneous network geometries. Instead, they have used an approximation approach, in which the locations of the non-terrestrial platforms are modeled with a PPP in a two-dimensional plane with proper density matching \cite{song2022cooperative,al2021modeling}. Very recently, a unified network model for ISTNs using the sphere shells has been proposed in \cite{park2023unified}. However, in \cite{park2023unified}, the tractability of the network model is reduced by randomly determining the height of the terrestrial BSs and the altitude of satellites. Furthermore, although terrestrial and satellite networks are assumed to share the same spectrum, this may differ from the actual network implementation.

\vspace{-0.2cm}


\subsection{Contributions} 
The main contributions are summarized as follows:
\begin{itemize}
    \item {\bf Integrated Network Modeling:} We introduce a unified $K$-tier ISTN model, in which each class of satellite and cellular BSs is drawn from a homogeneous PPP on concentric spheres with distinct radii per tier, i.e., network heights per tier. This model extends a single-layered satellite network model in \cite{park2022tractable} into the multiple-layered network. In addition, the modeling of the locations of cellular BSs on the surface of the Earth with a fixed height is different from the time-honored model of the cellular network on the two-dimensional plane.  \cite{andrews2011tractable}. Further, the proposed model has the key distinction from our recent work \cite{park2023unified} in that the BSs, being different layers, do not interfere with users associated with that specific network. This scenario captures a practical case in which satellite and cellular network operators do not share the frequency spectrum. At the same time, a user can be connected by a cellular or satellite network in one of multiple tiers opportunistically, i.e., multi-connectivity service. 
    

\item {\bf Coverage Distribution for Multi-Connectivity:} We characterize the coverage distribution for a typical user connected to the BS that provides the maximum instantaneous SINR in the $K$-tier network. Our primary technical method in characterizing the distribution is to harness a random displacement that transforms a PPP distributed on the surface of a sphere into another PPP distributed on a ring in the plane. This transformation maintains a statistically invariant distance distribution from a typical user to the BSs, provided that the densities of the PPPs are appropriately scaled. The coverage distribution we derive accounts for the shadowed-Rician fading effects in satellite links and considers the influence of multi-tier network densities. In specific plausible scenarios, we present the resulting expression in closed form, providing valuable insights for system design.

\item {\bf System Design Insights:} Our first key observation is that the SIR coverage of the cellular network is a variant of network densities. This is a different observation in \cite{andrews2011tractable}; the SIR distributions are invariant for the network density when modeling the BS locations according to a homogeneous PPP on $\mathbb{R}^2$. This disparity arises because our proposed model can capture the visibility probability of a typical downlink user within the cellular network. Another critical observation is the macro-diversity that can be achieved as tiers increase. Because the tiers do not share spectrum due to multi-band, the probability of outage decreases without the influence of interference as the number of tiers increases. This macro-diversity varies depending on the channel fading, the density of BSs for each tier, and the density ratio of users associated with the tier. Therefore, there is an optimal density ratio of users associated with the tier that can achieve maximum macro-diversity depending on the channel fading and the density of BSs for each tier. We adopt a flexible cell association that controls the proportion of users assigned to the network through a biasing factor. Using this biasing factor, we can examine the optimal association ratio for users to maximize the coverage performance according to the channel fading and the density. This examination provides a comprehensive network design guideline for the seamless integration of these two networks, ensuring optimal performance for multi-connectivity services.
\end{itemize}

\section{System Model} \label{sec:system model}
In this section, we present an ISTN network and channel fading models for downlink communications. Subsequently, we describe a cell association strategy and provide a performance metric for analyzing downlink ISTNs.
\vspace{-0.23cm}

\subsection{Network Model}
We consider a $K$-tier ISTN consisting of $K_{\sf T}$-tier cellular and $K_{\sf S}$-tier satellite networks, where $K=K_{\sf T}+K_{\sf S}$. For the sake of notational convenience, we represent the indices corresponding to the terrestrial networks as $k=1,...,K_{\sf T}$, and those for the satellite network as $k=K_{\sf T}+1,...,K$.

{\bf Spatial distribution of the BSs and users:} To explain the spatial network geometry, we define the surface of the sphere with radius $R_k$ as
\begin{align}
    {\mathcal S}_{k}=\{{\bf x}_k\in\mathbb{R}^3:\lVert\mathbf{x}\rVert_2=R_k\}. \label{eq:surface of sphere}
\end{align}
In our network model, we assume that the locations of base stations (BSs) for cellular and satellite networks are placed on the surface of sphere ${\mathcal S}_{k}^2$ with radius $R_k$ for $k\in [K]$ as illustrated in Fig. \ref{fig:Spherical cap}.  We also denote the locations of the $i$th BS for the $k$th tier network by ${\bf x}_{k,i}\in {\mathcal S}_{k}^2$. The locations of the BSs for the $k$th tier network $\Phi_k=\{{\bf x}_{k,1},...,{\bf x}_{k,N_k}\}$ are assumed to be distributed according to homogeneous PPPs with density $\lambda_k$. Therefore, the number of BSs on ${\mathcal S}_{k}^2$, $N_k$, follows a Poisson distribution with mean $\lambda_k|{\mathcal S}_{k}|=4\pi \lambda_kR_{k}^2$.

Downlink users are assumed to be located on the surface of the Earth with radius $R_{\sf E}$, i.e., $\mathcal{S}_{R_{\sf E}}$. The locations of the users are distributed according to a homogeneous PPP $\Phi_U=\{\mathbf{u}_1,...,\mathbf{u}_{N_U}\}$ with density $\lambda_U$ on $\mathcal{S}_{R_{\sf E}}$, where $\mathbf{u}_i$ is the location of the $i$th user and $N_U$ is the number of users that follows a Poisson distribution with mean $\lambda_U|\mathcal{S}_{R_{\sf E}}|=4\lambda_U\pi R_{\sf E}^2$. 

Our model differs from the classical cellular network model, in which the locations of the BSs and users are distributed in a two-dimensional plane $\mathbb{R}^2$. Whereas, in our model, the BSs and users are located on the surface of the spheres with different radii, which incorporates the curvature of the Earth.

{\bf Typical spherical cap and visible region:} Under the premise that $\Phi_k$ and  $\Phi_U$  follow homogeneous PPPs on the concentric spheres $\mathcal{S}_k$ for $k\in[K]$ and $\mathcal{S}_{\sf E}$, the statistical distribution of $\Phi_k$ with respect to any element in $\Phi_U$ remains invariant under rotation. Leveraging Slivnyak's theorem [34], we can, without loss of generality, focus on a typical user positioned at $(0,0,R_{\sf E})$. As  depicted in Fig. \ref{fig:Spherical cap}, we denote the typical spherical cap of the $k$th network by  $\mathcal{A}_k \subset \mathcal{S}_k$.
This typical spherical cap is defined as the field of view at the typical receiver's location $(0,0,R_{\sf E})$ with the visible elevation angle $\theta_{k}$ as
\begin{equation}
    \theta_{k}=\left(0,\theta_{k}^{\rm max}\right],
\end{equation}
where $\theta_k^{\rm{max}}=\cos^{-1}\left(\frac{R_{\sf E}}{R_k}\right)$ is the maximum visible elevation angle.


Without loss of generality, our attention is directed toward the downlink user performance within the confines of the typical spherical cap. It is essential to highlight that, exclusive to the typical user, only BSs positioned within the typical spherical cap $\mathcal{A}_k$ for $k\in [K]$ are observable. The spatial distributions of these visible cellular and satellite BSs are identical to the homogeneous PPPs with density $\lambda_k$. 

\vspace{-0.2cm}

\begin{figure}[t!]
  \centering
  \includegraphics[width=0.3\textwidth]{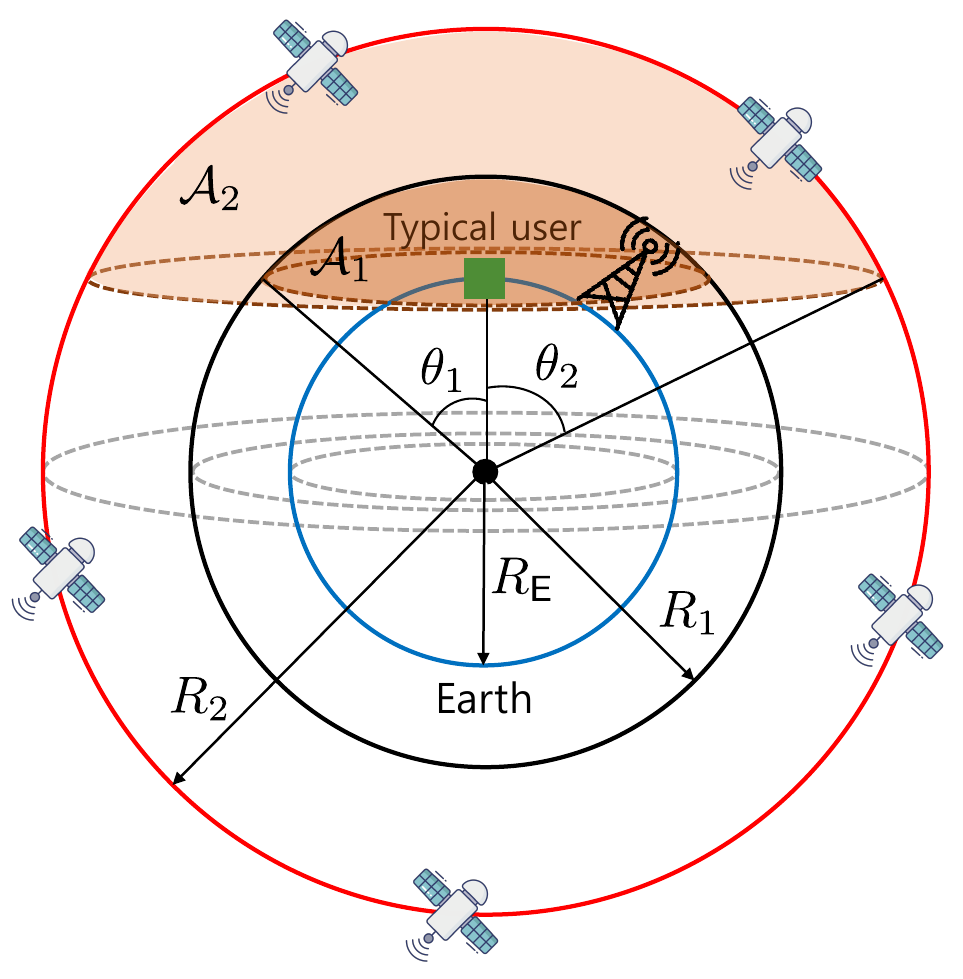}
  \caption{An illustration of the 2-tier ISTN network geometry when  $K_{\sf{T}}=K_{\sf{S}}=1$. The shaded area of the spheres represents the spherical cap, which is the area of BSs that can be associated with a typical user.} \vspace{-0.2cm}
  \label{fig:Spherical cap}
\end{figure}

\subsection{Channel and Antenna Gain Model}

The propagation through the wireless channel is modeled by the combination of path-loss attenuation, shadowing, and small-scale fading effects.

{\bf Path-loss model:} The path-loss model is dependent on the distance from the $i$th BS of the $k$th tier located at ${\bf x}_{k,i}\in\Phi_k$ for $k\in [K]$ to the typical user located at $\mathbf{u}_1$ as
\begin{equation}
    \lVert{\bf x}_{k,i}-\mathbf{u}_1\rVert^{-\alpha_k},
\end{equation} 
where $\alpha_k\ge 2$ is the path loss exponent. We assume a higher path loss exponent for cellular networks than satellite networks due to their distinct channel propagation characteristics. Cellular networks typically navigate urban communication scenarios characterized by rich scattering, necessitating a higher path loss exponent. In contrast, satellite networks predominantly operate in line-of-sight (LOS) scenarios, prompting a lower path loss exponent. Notably, the flexibility to adjust these path loss exponents independently for each tier enhances adaptability to diverse network conditions.


{\bf Antenna gain model:}  We assume that all BSs in the cellular and satellite networks adopt directional beamforming. For analysis tractability, we use a simple two-level antenna gain model wherein the main lobe is confined within the beamwidth, and the side lobe provides a reduced beam gain compared to the main lobe. We denote the main and side lobe antenna gains for the BSs in the $k$th tier network by ${G}_k^{\sf M}$ and ${G}_k^{\sf S}$. Assuming that BS $i$ in the $k$th tier network is catering to the downlink user through main lobe beam alignment, while the beams of the interfering links are oriented towards the side lobe angles, the modeled effective beam gain is expressed as follows:
\begin{align}
    G_{k,i}=\begin{cases}
			G_{k,i}^{\sf M}G^{\sf M}\frac{c^2}{(4\pi f_k)^2}, & \text{ BS $i$ is serving }\\
          G_{k,i}^{\sf S}G^{\sf S}\frac{c^2}{(4\pi f_k)^2}, & \text{otherwise},
		 \end{cases}
\end{align}
 where $G^{\sf M}$ and $G^{\sf S}$ denote the main and side lobe beam gains of the downlink users. This simple two-lobe antenna gain model allows the analysis to be more tractable, and its effectiveness becomes particularly pronounced when coupled with sophisticated beam pattern techniques, such as Dolph-Chebyshev beamforming weights \cite{dolph1946current}.

{\bf Fading model:} For satellite links, we use the shadowed-Rician fading model, a time-honored fading model that captures land mobile satellite (LMS) links. Let $H_{k,i}$ be the fading power from the $i$th BS of the $k$th tier to the typical user. The probability density function (PDF) of $H_{k,i}$ is given by \cite{abdi2003new}
\begin{equation}
    f_{H_{k,i}}(x)=Z_k e^{-\beta_k x} e^{\delta_k x}\sum_{l=0}^{m_k-1}\frac{(-\delta_k x)^l(1-m_k)_l}{(l!)^2},
\end{equation}
 where $(w)_l=w(w+1)\times\cdots\times(w+l-1)$ is the Pochhammer symbol,  $Z_k=\left(\frac{2b_k m_k}{2b_k m_k+\Omega_k}\right)^{m_k}/(2b_k)$, $\beta_k=\frac{1}{2b_k}$, and $\delta_k=\frac{\Omega_k}{2b_k(2b_k m_k+\Omega_k)}$. Here, $m_k \in \mathbb{Z}^+$ denotes the shape parameter of the Nakagami-$m$ distribution. Also, $\Omega_k$ and $2b_k$ are the average power of the LoS and multi-path components of the $k$th tier downlink channel, respectively. Then, the cumulative distribution function (CDF) of $H_{k,i}$ is given by
\begin{equation}
    F_{H_{k,i}}(x)=\sum_{l=0}^{m_k-1}\frac{Z_k(-\delta_k)^l(1-m_k)_l}{(l!)^2(\beta_k-\delta_k)^{l+1}}\gamma(l+1,(\beta_k-\delta_k)x), \label{eq:CDF of SRF}
\end{equation}
where $\gamma(s,x) =\int_{0}^x e^{-t}t^{s-1}{\rm d}t$ is the lower incomplete gamma function. The shadowed-Rician fading model exhibits remarkable versatility. Notably, the Nakagami-$m$ fading emerges as a special case of shadowed-Rician fading by setting the parameter $b_k$ to sufficiently small values. In our analysis, we leverage both the shadowed-Rician and Nakagami-$m$ channel fading models for the satellite and terrestrial channels, respectively.

\vspace{-0.2cm}




\subsection{Flexible Network Association for Data Offloading}
We propose an open access policy for facilitating flexible network association, wherein a downlink user can establish connections with all $K$-tier networks, each equipped with orthogonal frequency bands. For a typical downlink user positioned at ${\bf u}_1$, first measures the effective ERP from the nearest BS within the $j$th tier, located at ${\bf x}_{j,1}$ with bias factor $B_j$ as
\begin{align}
    P_j^{\rm{eff}}=P_j G_{j,1} B_j E_j||{\bf x}_{j,1}-\mathbf{u}_1||^{-\alpha_j}, \label{eq:ERP}
\end{align}
where $P_j$ is the transmit power of the BS in the $j$th tier and $E_j$ denotes the average channel fading gain of the $j$th tier. The latter encompasses distinct values for shadowed-Rician satellite channels and the Nakagami-$m$ terrestrial channel. The bias factor plays a crucial role in influencing the dynamics of data offloading in multi-tier networks. Downlink users tend to be naturally inclined towards connecting to terrestrial networks without using the bias factor, owing to their inherently larger ERP than satellite networks. To serve more users, satellite network operators may adjust the bias factor upwards, enhancing the ERP and favoring satellite connections.

After measuring the ERP, the downlink user chooses the best network that produces the highest ERP as
\begin{align}
    k = \arg\max_{j\in [K]} P_j^{\rm{eff}}.
\end{align}
 
\vspace{-0.2cm}



\subsection{Performance Metric}
Let $J\in[K]$ be a random variable indicating the connected network index. Given that the typical downlink user is linked to the $k$th network, i.e., $J=k$, we first define the coverage probability. As the downlink user is served by the closest BS in the $k$th tier, placed at ${\bf x}_{k,1}$, the signal-to-interference-plus-noise ratio (SINR) of the typical downlink  user positioned at $\mathbf{u}_1$ is
\begin{align}
      {\rm{SINR}}_{k}=\frac{H_{k,1}\lVert{\bf x}_{k,1}-\mathbf{u}_1\rVert^{-\alpha_k}}{\sum_{i=2}^{N_k}\tilde{G}_k H_{k,i}\lVert{\bf x}_{k,i}-\mathbf{u}_1\rVert^{-\alpha_k}+\hat{\sigma}_k^2}, \label{eq:SINR simplified}
\end{align}
where $\hat{\sigma}_k^2=\frac{\sigma_k^2}{G_k P_k}$ is the normalized noise power of the $k$th tier and  $\tilde{G}_k=\frac{\bar{G}_k}{G_k}<1$ is the normalized antenna gain. Notice there is no cross-tier interference because all networks are assumed to use orthogonal frequency bands. The probability that the instantaneous SINR of the typical downlink  user from the $k$th tier is greater than a target SINR $T$ is defined as
\begin{align}
    {\sf P}_{k}^{\sf cov}(T) = \mathbb{P}[{\rm{SINR}}_k>T \mid J=k].
\end{align}
Since the typical user is associated with at most one tier, from the law
of total probability, the coverage probability of the $k$th tier network under the open access policy is given as
    \begin{align}
        {\sf P}^{\sf cov}(T)&= \mathbb{P}\left[\cup_{k=1}^{K}{\sf SINR}_k >T\right]\nonumber\\
        &=\sum_{k=1}^{K} \mathbb{P}[{\rm{SINR}}_k>T \mid J=k] \mathbb{P}[J=k] \nonumber \\
        &=\sum_{k=1}^{K}{\sf P}_{k}^{\sf cov}(T)\mathbb{P}[J=k], \label{eq:Pcov}
    \end{align}
where the network association probability is defined as
\begin{align}
    \mathbb{P}[J=k] =\mathbb{P}\left[ P_k^{\rm eff} > \max_{j\neq k} P_j^{\rm eff} \right].
\end{align}
This network association probability is chiefly determined by the nearest distribution of the BS in each tier, i.e., the density of the network $\lambda_k$ and the bias factors $B_k$ for $k\in [K]$. In this paper, we shall characterize the coverage probability in \eqref{eq:Pcov} and provide a system design guideline for optimizing the network parameters, such as the network densities and bias factors.

\vspace{-0.3cm}

\section{Preliminaries} \label{sec:mathematical preliminaries}
In this section, we introduce some useful lemmas that are essential for establishing the coverage probability expressions. 

\vspace{-0.25cm}

\subsection{Geometry Transform from a Spherical Cap to a Ring}
We introduce a displacement lemma that transforms the point process in a visible spherical cap into the point process in a projected ring. To explain this, we define a circular ring $\mathcal{\tilde A}_k\subset \mathbb{R}^2$ centered at $(0,0)\in \mathbb{R}^2$ with outer and inner radii, denoted as  $R_{{\rm{min}},k}=R_k-R_{\sf E}$ and $R_{{\rm{max}},k}=\sqrt{R_k^2+R_{\sf E}^2-2R_{\sf E} R_k\cos\theta_k}$ as illustrated in Fig. \ref{fig:Transformed spherical cap}. Let ${\tilde \Phi}$ be a homogeneous PPP on $\mathcal{\tilde A}$ with density ${\tilde \lambda}_k$.


\begin{figure}[t!]
  \centering
  \includegraphics[width=0.35\textwidth]{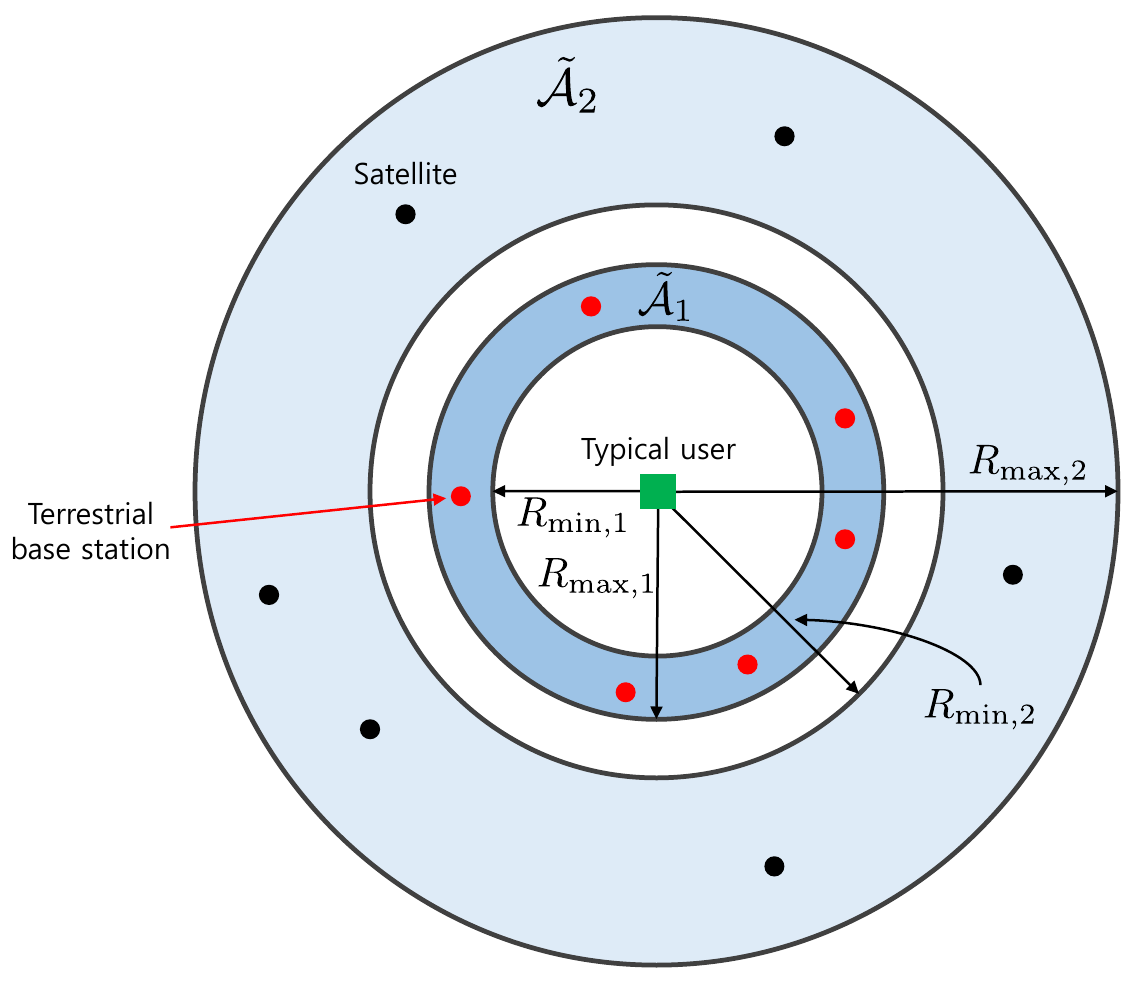}
  \caption{  An illustration of the transformed geometry for the 2-tier ISTN. The shaded area represents annulus $\tilde{\mathcal{A}}_k$ corresponding to spherical cap $\mathcal{A}_k$ for $k=1,2$.} \vspace{-0.2cm}
  \label{fig:Transformed spherical cap}
\end{figure}
\vspace{0.2cm}
\begin{lemma}
The distance distributions from a typical user are preserved between $\Phi_k$ on $\mathcal{A}_k  \subset \mathbb{R}^3$ with density $\lambda_k$ and  ${\tilde \Phi}_k$ on $\mathcal{\tilde A}_k \subset \mathbb{R}^2$ with density ${\tilde \lambda}_k$, provided that
\begin{equation}
    \tilde{\lambda}_k=\frac{R_k}{R_{\sf E}}\lambda_k.
\end{equation}
\end{lemma}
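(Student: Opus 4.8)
The plan is to reduce the statement to an equality of one-dimensional intensity measures. The SINR in \eqref{eq:SINR simplified} depends on the spatial configuration only through the distances $\lVert{\bf x}_{k,i}-{\bf u}_1\rVert$ (the fading marks $H_{k,i}$ and the antenna gains are independent of the BS positions), so it suffices to show that the point process of distances induced on $\mathbb{R}_{+}$ by $\Phi_k$ on the cap $\mathcal{A}_k$ and the one induced by $\tilde\Phi_k$ on the planar ring $\tilde{\mathcal{A}}_k$ have the same law. Both induced processes are images of homogeneous PPPs under a measurable distance map, hence are themselves PPPs; by the displacement/mapping theorem they coincide in distribution precisely when their intensity measures on $\mathbb{R}_{+}$ agree. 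Thus the whole lemma collapses to matching the expected number of points at each distance.

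First I would fix the typical user at $(0,0,R_{\sf E})$, place it at the center of the ring $\tilde{\mathcal{A}}_k$, and parametrize a point of $\mathcal{S}_k$ by its center angle $\theta$ (measured from the user's zenith), so that the Euclidean distance to the user is $d(\theta)=\sqrt{R_k^2+R_{\sf E}^2-2R_kR_{\sf E}\cos\theta}$. Checking the endpoints gives $d(0)=R_k-R_{\sf E}=R_{{\rm min},k}$ and, using $\cos\theta_k^{\rm max}=R_{\sf E}/R_k$, $d(\theta_k^{\rm max})=\sqrt{R_k^2-R_{\sf E}^2}=R_{{\rm max},k}$. Hence the range of distances realized on the cap is exactly the radial support $[R_{{\rm min},k},R_{{\rm max},k}]$ of the ring, and this alignment is what makes the projection consistent.

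Next I would compute the distance intensity on the cap. The area of the infinitesimal spherical annulus at center angle $\theta$ is $2\pi R_k^2\sin\theta\,d\theta$, so $\Phi_k$ contributes $\lambda_k\,2\pi R_k^2\sin\theta\,d\theta$ expected points there. Differentiating $d^2=R_k^2+R_{\sf E}^2-2R_kR_{\sf E}\cos\theta$ gives the Jacobian $\sin\theta\,d\theta=\frac{d}{R_kR_{\sf E}}\,dd$, which converts the cap intensity into $\lambda_k\frac{2\pi R_k}{R_{\sf E}}\,d\,dd$ expected points at distance $[d,d+dd]$. On the planar side, a point at radius $\rho=d$ from the center sits at distance $d$, and the ring contributes $\tilde\lambda_k\,2\pi d\,dd$ expected points at the same distance. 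Equating the two expressions forces $\tilde\lambda_k=\frac{R_k}{R_{\sf E}}\lambda_k$; conversely, this choice makes the two distance intensities identical on all of $[R_{{\rm min},k},R_{{\rm max},k}]$, which by the PPP characterization above proves the claim.

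The computation is routine; the step carrying the real content is the observation that distance is the only functional of the configuration that the coverage metric sees, which legitimizes replacing the three-dimensional cap process by the two-dimensional ring process the moment the induced distance intensities match. The only points requiring care are the Jacobian of the change of variables $\theta\mapsto d$ and the endpoint matching, since a misidentification of $\theta_k^{\rm max}$ would misalign the outer radius $R_{{\rm max},k}$ and destroy the equality of intensities.
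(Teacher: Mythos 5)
Your proof is correct. Note that the paper itself does not prove this lemma at all---it simply defers to an external reference---so your argument is a complete, self-contained substitute. Your route (push both PPPs forward through the distance map, invoke the mapping theorem to conclude the induced distance processes are again PPPs, then match the differential intensities via the Jacobian $\sin\theta\,{\rm d}\theta = \frac{d}{R_kR_{\sf E}}\,{\rm d}d$) is the standard one for displacement lemmas of this type; the usual presentation in the cited literature works instead with the integrated quantities, i.e., it equates the expected number of points within distance $r$ of the user, which on the cap is $\lambda_k\,\frac{\pi R_k}{R_{\sf E}}\bigl(r^2-R_{{\rm min},k}^2\bigr)$ (the area of the cap cut by a ball of radius $r$) and on the ring is $\tilde{\lambda}_k\,\pi\bigl(r^2-R_{{\rm min},k}^2\bigr)$, giving the same condition $\tilde{\lambda}_k=\frac{R_k}{R_{\sf E}}\lambda_k$ through void probabilities rather than densities. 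The two computations are equivalent (one is the derivative of the other); the cumulative/void-probability form has the minor advantage of directly producing the nearest-distance CDF reused in Lemma 2, while your differential form makes the necessity of the density scaling (not just sufficiency) transparent. Your endpoint checks and the Jacobian are both right, so there is no gap.
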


\begin{proof}
    See \cite{kim2023coverage}.
\end{proof}


Applying this geometry transformation lemma allows us to leverage more straightforward geometric intuition to analyze the coverage probability of sophisticated $K$-tier ISTNs.

\vspace{-0.2cm}


\subsection{First Touch Distribution}
The distribution of the nearest BS from the typical downlink user is a key ingredient in analyzing the coverage probability. The following lemma elucidates the first-touch distribution within the transformed geometry. Unlike the traditional cellular networks modeled in $\mathbb{R}^2$, our network model using the transformed geometry on the ring $\mathcal{\tilde A}_k$ requires computing the nearest BS's distance distribution conditioned that one BS at least exists in the ring. The void probability for ${\tilde \Phi}_k$ on $\tilde{\mathcal{A}}_k$ is 
\begin{align}
     \mathbb{P}[{\tilde \Phi}_k(\tilde{\mathcal{A}}_k)=0]=e^{-\pi\tilde{\lambda}_k\left(R_{{\rm{max}},k}^2-R_{{\rm{min}},k}^2\right)}.
\end{align}
From the void probability, the following lemmas describes the conditional first-touch distribution of the BS on ${\tilde \Phi}_k$.
\begin{lemma} \label{lem2}
    Let $r$ be the distance from a typical user to the nearest BS of the $k$th tier. Then, the PDF of $r$ conditioned on that at least one BS is visible is obtained as
    \begin{align}
        &f_{k}(r|{\tilde \Phi}_k(\tilde{\mathcal{A}}_k)>0) \nonumber \\
        &=\begin{cases}
            \frac{2r\tilde{\lambda}_k\pi e^{-\tilde{\lambda}_k\pi(r^2-R_{{\rm{min}},k}^2)}}{1-e^{-\pi\tilde{\lambda}_k\left(R_{{\rm{max}},k}^2-R_{{\rm{min}},k}^2\right)}},&{\rm{for}}\;R_{{{\rm{min}}},k}<r<R_{{{\rm{max}}},k},\\
            0,&{\rm{otherwise}}.
        \end{cases}
    \end{align}
    \label{conditional nearest distance distribution}
\end{lemma}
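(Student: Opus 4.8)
The plan is to obtain the nearest-distance law directly from the void probability of $\tilde{\Phi}_k$ and then renormalize by the probability that the ring is non-empty. Let $R$ denote the (random) distance from the typical user to the closest point of $\tilde{\Phi}_k$ in $\tilde{\mathcal{A}}_k$. First I would write down the complementary CDF of $R$. For any $r$ with $R_{{\rm{min}},k}\le r\le R_{{\rm{max}},k}$, the event $\{R>r\}$ is exactly the event that $\tilde{\Phi}_k$ places no point in the inner sub-annulus $\{R_{{\rm{min}},k}\le \rho\le r\}$, whose area is $\pi(r^2-R_{{\rm{min}},k}^2)$. Because $\tilde{\Phi}_k$ is a homogeneous PPP with density $\tilde{\lambda}_k$, the void probability over a region of area $A$ equals $e^{-\tilde{\lambda}_k A}$, and therefore $\mathbb{P}[R>r]=e^{-\tilde{\lambda}_k\pi(r^2-R_{{\rm{min}},k}^2)}$. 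Evaluating at $r=R_{{\rm{max}},k}$ recovers the stated void probability $\mathbb{P}[\tilde{\Phi}_k(\tilde{\mathcal{A}}_k)=0]$, a consistency check confirming that the whole ring being empty is the same event as $R$ exceeding $R_{{\rm{max}},k}$.

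The second step is the conditioning. I would note that for every $r<R_{{\rm{max}},k}$ the event $\{R\le r\}$ already forces $\tilde{\Phi}_k(\tilde{\mathcal{A}}_k)>0$, so the joint event $\{R\le r,\ \tilde{\Phi}_k(\tilde{\mathcal{A}}_k)>0\}$ coincides with $\{R\le r\}$ and no further intersection bookkeeping is needed. Dividing by $\mathbb{P}[\tilde{\Phi}_k(\tilde{\mathcal{A}}_k)>0]=1-e^{-\pi\tilde{\lambda}_k(R_{{\rm{max}},k}^2-R_{{\rm{min}},k}^2)}$ then yields the conditional CDF \[ F_k(r\mid\tilde{\Phi}_k(\tilde{\mathcal{A}}_k)>0)=\frac{1-e^{-\tilde{\lambda}_k\pi(r^2-R_{{\rm{min}},k}^2)}}{1-e^{-\pi\tilde{\lambda}_k(R_{{\rm{max}},k}^2-R_{{\rm{min}},k}^2)}} \] on $[R_{{\rm{min}},k},R_{{\rm{max}},k}]$. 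Differentiating in $r$, the chain rule brings down the factor $2\tilde{\lambda}_k\pi r$ from the exponent $-\tilde{\lambda}_k\pi(r^2-R_{{\rm{min}},k}^2)$ and reproduces the claimed density; outside the interval the conditional CDF is constant, so the PDF vanishes, matching the second case.

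I do not anticipate a genuine obstacle, since the computation is elementary; the only point requiring care is the conditioning argument. One must justify that truncating the support to the ring introduces nothing beyond the normalizing constant, i.e., that $R$ is well defined precisely on the non-empty event and that the correct normalizer is the non-void probability rather than a quantity mixing in the empty-ring contribution. As a final sanity check I would verify that the derived density integrates to one over $(R_{{\rm{min}},k},R_{{\rm{max}},k})$: the substitution $u=r^2-R_{{\rm{min}},k}^2$ collapses the numerator's integral to $1-e^{-\tilde{\lambda}_k\pi(R_{{\rm{max}},k}^2-R_{{\rm{min}},k}^2)}$, exactly cancelling the denominator.
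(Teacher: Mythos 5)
Your proof is correct and takes essentially the same route as the paper: both rest on the PPP void probability of the inner sub-annulus of area $\pi(r^2-R_{{\rm{min}},k}^2)$, normalization by $\mathbb{P}[{\tilde \Phi}_k(\tilde{\mathcal{A}}_k)>0]$ via Bayes, and differentiation in $r$. The only difference is cosmetic: the paper computes the conditional CCDF and factors the joint event $\{{\tilde \Phi}_k(\tilde{\mathcal{A}}_k^r)=0,\ {\tilde \Phi}_k(\tilde{\mathcal{A}}_k)>0\}$ using independence of the PPP on disjoint regions, whereas you compute the conditional CDF and note that $\{R\le r\}$ is already contained in the non-void event, which sidesteps the independence step but yields the identical distribution.
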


\begin{proof}
    See Appendix \ref{proof of conditional nearest distance distribution}.
\end{proof}
 
This PDF will be a key role in computing the coverage probability of the $k$th tier network.

\vspace{-0.2cm}

\subsection{Network Connection Probability}

By the open access policy, we assume that the typical downlink user is connected to the network generating the highest ERP in \eqref{eq:ERP}. The subsequent lemmas elucidate the probability of network connection contingent upon the maximum ERP under the open access policy.

\begin{lemma}
   The probability that a typical user is associated with the $k$th tier conditioned on that at least one BS is visible in the $k$th tier is
    \begin{align}
        \mathbb{P}&[J=k|{\tilde \Phi}_k(\tilde{\mathcal{A}}_k)>0] \nonumber \\
        &=\frac{2\pi\tilde{\lambda}_k}{1-e^{-\pi\tilde{\lambda}_k\left(R_{{\rm{max}},k}^2-R_{{\rm{min}},k}^2\right)}}\int_{R_{{\rm{min}},k}}^{R_{{\rm{max}},k}}r\prod_{j=1}^{K}{\sf P}_{k,j}{\rm{d}}r,
    \end{align}
    where
    \begin{align}
        &\!\!\!{\sf P}_{k,j}=\mathbb{P}\left[||{\bf x}_{j,1}-\mathbf{u}_{1}||>\left(\hat{P}_j\hat{G}_j\hat{B}_j\hat{E}_j\right)^{\frac{1}{\alpha_j}}r^{\frac{1}{\hat{\alpha_j}}}\big|{\tilde \Phi}_k(\tilde{\mathcal{A}}_k)>0\right] \nonumber \\
        &\!\!\!\!\!=\begin{cases}
            0, &U_{k,j}<r<R_{{\rm{max}},k}, \\
            e^{-\pi\tilde{\lambda}_j\left[\left(\hat{P}_j\hat{G}_j\hat{B}_j\hat{E}_j\right)^{\frac{2}{\alpha_j}}r^{\frac{2}{\hat{\alpha}_j}}-R_{{\rm{min}},j}^2\right]}, &L_{k,j}<r<U_{k,j},\\
            1, &R_{{\rm{min}},k}<r<L_{k,j},
        \end{cases}
    \end{align}
    \begin{equation}
        U_{k,j}=\left(\frac{R_{{\rm{max}},j}^{\alpha_j}}{\hat{P}_j\hat{G}_j\hat{B}_j\hat{E}_j}\right)^{\frac{1}{\alpha_k}},\; and \; L_{k,j}=\left(\frac{R_{{\rm{min}},j}^{\alpha_j}}{\hat{P}_j\hat{G}_j\hat{B}_j\hat{E}_j}\right)^{\frac{1}{\alpha_k}}
    \end{equation}
    with 
    \begin{equation}
        \hat{P}_j=\frac{P_j}{P_k},\;  \hat{G}_{j}=\frac{G_{j,1}}{G_{k,1}},\;\hat{B}_j=\frac{B_j}{B_k},\;\hat{E}_j=\frac{E_j}{E_k},\;\hat{\alpha}_j=\frac{\alpha_j}{\alpha_k}.
    \end{equation} \label{cell association}
\end{lemma}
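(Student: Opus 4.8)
The plan is to reduce the association event $\{J=k\}$ to a collection of independent single-tier distance conditions, condition on the nearest tier-$k$ distance, and then evaluate each condition through the void probability of the transformed ring process. First I would translate the maximum-ERP rule into distances. Writing $r=\lVert{\bf x}_{k,1}-{\bf u}_1\rVert$ for the nearest tier-$k$ distance and $r_j=\lVert{\bf x}_{j,1}-{\bf u}_1\rVert$ for tier $j$, the defining inequality $P_k^{\rm eff}>P_j^{\rm eff}$ from \eqref{eq:ERP} reads $P_kG_{k,1}B_kE_k\,r^{-\alpha_k}>P_jG_{j,1}B_jE_j\,r_j^{-\alpha_j}$. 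Solving for $r_j$ and collecting the ratios into $\hat P_j,\hat G_j,\hat B_j,\hat E_j$ and $\hat\alpha_j=\alpha_j/\alpha_k$ yields the equivalent event $r_j>(\hat P_j\hat G_j\hat B_j\hat E_j)^{1/\alpha_j}r^{1/\hat\alpha_j}$, which is exactly the threshold appearing in ${\sf P}_{k,j}$. Hence, conditioned on $r$, the event $\{J=k\}$ is the intersection over $j\neq k$ of these threshold events.

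Second, I would invoke the displacement lemma to replace each spherical-cap process $\Phi_j$ on $\mathcal A_j$ by the ring process $\tilde\Phi_j$ on $\tilde{\mathcal A}_j$ with density $\tilde\lambda_j=(R_j/R_{\sf E})\lambda_j$, which preserves every distance $r_j$ entering the event. Because the tiers use orthogonal bands, the processes $\tilde\Phi_1,\dots,\tilde\Phi_K$ are mutually independent, so after conditioning on $r$ drawn from the first-touch density $f_k(r\mid\tilde\Phi_k(\tilde{\mathcal A}_k)>0)$ of Lemma \ref{lem2}, the joint probability of the $K-1$ threshold events factors as $\prod_{j\neq k}{\sf P}_{k,j}$. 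Integrating over $r$ then gives $\mathbb P[J=k\mid\tilde\Phi_k>0]=\int_{R_{{\rm min},k}}^{R_{{\rm max},k}}f_k(r\mid\cdot)\prod_{j\neq k}{\sf P}_{k,j}\,{\rm d}r$.

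Third, I would compute each ${\sf P}_{k,j}=\mathbb P[r_j>t]$ with $t=(\hat P_j\hat G_j\hat B_j\hat E_j)^{1/\alpha_j}r^{1/\hat\alpha_j}$ from the void probability of $\tilde\Phi_j$ on the annulus between $R_{{\rm min},j}$ and $t$. Comparing $t$ against the ring boundaries produces three regimes: for $t\le R_{{\rm min},j}$ (i.e.\ $r\le L_{k,j}$) no tier-$j$ BS can lie closer than $t$, so ${\sf P}_{k,j}=1$; for $R_{{\rm min},j}<t<R_{{\rm max},j}$ (i.e.\ $L_{k,j}<r<U_{k,j}$) the void probability of the sub-annulus gives $e^{-\pi\tilde\lambda_j(t^2-R_{{\rm min},j}^2)}$ with $t^2=(\hat P_j\hat G_j\hat B_j\hat E_j)^{2/\alpha_j}r^{2/\hat\alpha_j}$; and for $t\ge R_{{\rm max},j}$ (i.e.\ $r\ge U_{k,j}$) no visible tier-$j$ BS can exceed the threshold, so ${\sf P}_{k,j}$ is set to $0$. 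Substituting $t=R_{{\rm min},j}$ and $t=R_{{\rm max},j}$ inverts the change of variable and reproduces the boundary expressions $L_{k,j}$ and $U_{k,j}$.

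Finally comes the bookkeeping step that collapses everything into one product: evaluating the same expression at $j=k$ forces all hatted ratios to $1$ and $\hat\alpha_k=1$, so ${\sf P}_{k,k}=e^{-\pi\tilde\lambda_k(r^2-R_{{\rm min},k}^2)}$, which is precisely the void-probability factor inside $f_k(r\mid\cdot)$. Absorbing it lets me rewrite $f_k(r\mid\cdot)\prod_{j\neq k}{\sf P}_{k,j}=\frac{2\pi\tilde\lambda_k r}{1-e^{-\pi\tilde\lambda_k(R_{{\rm max},k}^2-R_{{\rm min},k}^2)}}\prod_{j=1}^{K}{\sf P}_{k,j}$, and pulling the constant outside the integral yields the claimed form. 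The step I expect to be the main obstacle is the careful treatment of the $r\ge U_{k,j}$ regime—namely deciding how an empty tier $j$ contributes, since the stated ${\sf P}_{k,j}=0$ neglects the residual void probability $e^{-\pi\tilde\lambda_j(R_{{\rm max},j}^2-R_{{\rm min},j}^2)}$—together with keeping the change of variable between the threshold $t$ and $r$ through the tier-dependent exponent $1/\hat\alpha_j$ consistent across all three cases.
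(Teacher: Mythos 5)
Your proposal is correct and follows essentially the same route as the paper's own proof: translating the ERP comparison into per-tier distance thresholds, conditioning on the tier-$k$ first-touch distance from Lemma \ref{lem2}, evaluating each ${\sf P}_{k,j}$ by the void probability of the sub-annulus in three regimes, and absorbing the $j=k$ factor $e^{-\pi\tilde{\lambda}_k\left(r^2-R_{{\rm{min}},k}^2\right)}$ so that the product extends over all $j\in[K]$. The subtlety you flag in the regime $U_{k,j}<r<R_{{\rm{max}},k}$ is a genuine observation, but it is inherited from the paper itself, whose proof makes the identical choice---setting ${\sf P}_{k,j}=0$ there (i.e., treating an empty tier $j$ as a loss) while the middle regime uses the unconditional void probability $e^{-\pi\tilde{\lambda}_j\left(t^2-R_{{\rm{min}},j}^2\right)}$ (which counts an empty tier $j$ as a win)---so your derivation reproduces the stated lemma exactly as rigorously as the paper does.
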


\begin{proof}
    See Appendix \ref{proof of cell association}.
\end{proof}

\begin{lemma}
    When a typical user is associated with the $k$th tier, the PDF of $r$ conditioned on that at least one BS is visible is
    \begin{align}
        f_k(r|{\tilde \Phi}_k(\tilde{\mathcal{A}}_k)&>0,J=k)=\frac{1}{\mathbb{P}[J=k|{\tilde \Phi}_k(\tilde{\mathcal{A}}_k)>0]} \nonumber \\
        &\times\frac{2\pi\tilde{\lambda}_k}{1-e^{-\pi\tilde{\lambda}_k\left(R_{{\rm{max}},k}^2-R_{{\rm{min}},k}^2\right)}}r\prod_{j=1}^{K}{\sf P}_{k,j}.
    \end{align} \label{nearest distribution of kth tier}
\end{lemma}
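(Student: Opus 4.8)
The plan is to obtain this conditional density by a direct application of the definition of conditional probability (Bayes' rule for densities), recycling the ingredients already assembled in Lemmas \ref{lem2} and \ref{cell association}. Writing $A=\{{\tilde \Phi}_k(\tilde{\mathcal{A}}_k)>0\}$ for the visibility (non-void) event, I would start from
\begin{align}
    f_k(r|A,J=k)=\frac{f_k(r,J=k|A)}{\mathbb{P}[J=k|A]},
\end{align}
so that the whole task reduces to evaluating the joint density $f_k(r,J=k|A)$ in the numerator, since the denominator is exactly the quantity already delivered by Lemma \ref{cell association}.

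Next I would factor the joint density as $f_k(r,J=k|A)=f_k(r|A)\,\mathbb{P}[J=k|r,A]$: first place the nearest tier-$k$ BS at distance $r$ via the first-touch law, then require tier $k$ to win the association. The marginal $f_k(r|A)$ is supplied verbatim by Lemma \ref{lem2}. For the conditional association factor I would argue exactly as in the proof of Lemma \ref{cell association}, but \emph{without} the final integration over $r$: conditioned on the serving distance $r$, the event $J=k$ is $\{P_k^{\rm{eff}}>\max_{j\neq k}P_j^{\rm{eff}}\}$, which after substituting \eqref{eq:ERP} becomes a product of independent events---one per competing tier---because the PPPs ${\tilde \Phi}_j$ live on distinct tiers with independent realizations and, by the orthogonal-band assumption, there is no cross-tier coupling. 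This yields $\mathbb{P}[J=k|r,A]=\prod_{j\neq k}{\sf P}_{k,j}$, with the void case of a competing tier (no visible BS, hence zero ERP) automatically absorbed into ${\sf P}_{k,j}$ under the convention that its nearest-BS distance is $+\infty$.

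The one bookkeeping step that makes the stated formula compact is to recognize that the surviving exponential in $f_k(r|A)$ is itself the self-term ${\sf P}_{k,k}$. Indeed, setting $j=k$ gives $\hat{P}_k=\hat{G}_k=\hat{B}_k=\hat{E}_k=\hat{\alpha}_k=1$, so $U_{k,k}=R_{{\rm{max}},k}$, $L_{k,k}=R_{{\rm{min}},k}$, and ${\sf P}_{k,k}=e^{-\pi\tilde{\lambda}_k(r^2-R_{{\rm{min}},k}^2)}$, which is precisely the factor multiplying $\tfrac{2\pi\tilde{\lambda}_k r}{1-e^{-\pi\tilde{\lambda}_k(R_{{\rm{max}},k}^2-R_{{\rm{min}},k}^2)}}$ in Lemma \ref{lem2}. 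Absorbing it lets me rewrite $f_k(r|A)\prod_{j\neq k}{\sf P}_{k,j}=\tfrac{2\pi\tilde{\lambda}_k r}{1-e^{-\pi\tilde{\lambda}_k(R_{{\rm{max}},k}^2-R_{{\rm{min}},k}^2)}}\prod_{j=1}^{K}{\sf P}_{k,j}$, which is exactly the integrand appearing inside Lemma \ref{cell association}. Dividing this numerator by $\mathbb{P}[J=k|A]$ then gives the claim.

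I do not expect a genuine obstacle here: the result is simply the un-integrated integrand of Lemma \ref{cell association} normalized by its own integral, so the real content was already discharged in proving that lemma. The only points demanding care are (i) justifying the cross-tier product form of $\mathbb{P}[J=k|r,A]$ from the mutual independence of the tier PPPs together with the no-cross-tier-interference assumption, and (ii) the clean absorption of the first-touch exponential into ${\sf P}_{k,k}$ so that the product runs over all $j\in[K]$ rather than $j\neq k$; both are routine once the conditioning event $A$ is tracked consistently throughout.
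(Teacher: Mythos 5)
Your proposal is correct and matches the paper's proof in substance: the paper likewise applies Bayes' rule with the denominator supplied by Lemma \ref{cell association}, factors the cross-tier winning event into a product of per-tier survival probabilities evaluated at the scaled thresholds $\left(\hat{P}_j\hat{G}_j\hat{B}_j\hat{E}_j\right)^{\frac{1}{\alpha_j}}r^{\frac{1}{\hat{\alpha}_j}}$, and absorbs the first-touch exponential as the self-term ${\sf P}_{k,k}$ so that the product runs over all $j\in[K]$. The only cosmetic difference is that the paper computes the joint CCDF $\mathbb{P}[D_k>r,J=k|{\tilde \Phi}_k(\tilde{\mathcal{A}}_k)>0]$ as an integral over $[r,\infty)$ and then differentiates to obtain the PDF, which sidesteps the conditioning on the measure-zero event $D_k=r$ that your density-level factorization $f_k(r,J=k|\cdot)=f_k(r|\cdot)\,\mathbb{P}[J=k|D_k=r,\cdot]$ implicitly invokes.
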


\begin{proof}
    See Appendix \ref{proof of neareast distribtuion of kth tier}.
\end{proof}

These lemmas show how the network connection probability changes in terms of the critical system parameters including the network densities ${\tilde \lambda}_j$, the normalized bias factors ${\hat B}_j$, the normalized path-loss exponents ${\hat \alpha}_j$, the normalized transmit power ${\hat P}_j$, and the normalized average fading power ${\hat E}_j$.

\vspace{-0.2cm}

\subsection{Interference Laplace}
The analysis of the coverage probability of ISTNs relies significantly on understanding the distribution of the aggregated interference, specifically the intra-tier interference power. The subsequent lemma elucidates the Laplace transform of the aggregated interference power, providing a crucial insight into the overall performance of ISTNs.

\begin{lemma}
    Let $I_k$ be the aggregated interference power of the $k$th tier as
    \begin{equation}
        I_k=\sum_{i=2}^{N_k}\tilde{G}_k H_{k,i}\lVert{\bf \tilde s}_{k,i}-\tilde{\mathbf{u}}_1\rVert^{-\alpha_k}.
    \end{equation}
    Then, the Laplace transform of the aggregated interference power plus normalized noise variance $I_k+\hat{\sigma}_k^2$ of the $k$th tier conditioned on that at least one BS is on annulus $\tilde{\mathcal{A}}_k$ is obtained as
    \begin{align}
        &\mathcal{L}_{I_k+\hat{\sigma}_k^2|{\tilde \Phi}_k(\tilde{\mathcal{A}}_k)>0}(s)=\exp\left(-s\hat{\sigma}_k^2-2\pi\tilde{\lambda}_k\int_{r}^{R_{{\rm{max}},k}}\Bigg[1\right. \nonumber \\
        &\left.\left.-\frac{(2b_k m_k)^{m_k}(1+2b_k s\tilde{G}_k v^{-\alpha_k})^{m_k-1}}{\left[(2b_k m_k+\Omega_k)(1+2b_k s\tilde{G}_k v^{-\alpha_k})-\Omega_k\right]^{m_k}}\right]v{\rm{d}}v\right).
    \end{align} \label{Laplace of interference power plus noise}
\end{lemma}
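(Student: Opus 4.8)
The plan is to evaluate the conditional Laplace transform $\mathcal{L}_{I_k+\hat{\sigma}_k^2}(s)=\mathbb{E}\!\left[e^{-s(I_k+\hat{\sigma}_k^2)}\right]$ directly through the probability generating functional (PGFL) of the interfering point process. Since the normalized noise $\hat{\sigma}_k^2$ is deterministic, it factors out at once as $e^{-s\hat{\sigma}_k^2}$, producing the first term in the exponent; the remaining task is to compute $\mathbb{E}[e^{-sI_k}]$.

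First I would characterize the interferer configuration. Conditioned on the serving BS being the nearest point at distance $r$ (Lemma~\ref{lem2}), the standard independence property of a homogeneous PPP implies that the interfering BSs form a homogeneous PPP with density $\tilde{\lambda}_k$ on the annulus $\{v:r<v<R_{{\rm max},k}\}$. The event $\tilde{\Phi}_k(\tilde{\mathcal{A}}_k)>0$ is already encoded in the existence of this serving BS, so it imposes no further correction on the law of the interferers.

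Next, because the fading gains $\{H_{k,i}\}$ are i.i.d.\ and independent of the point locations, I would average over the fading inside the product before invoking the PGFL. Writing $v=\lVert\tilde{s}_{k,i}-\tilde{u}_1\rVert$ for a generic interferer distance, each per-point factor becomes the moment generating function $M_H(u)=\mathbb{E}_H[e^{-uH}]$ of the shadowed-Rician distribution evaluated at $u=s\tilde{G}_k v^{-\alpha_k}$. Applying the PGFL $\mathbb{E}[\prod_x f(x)]=\exp(-\tilde{\lambda}_k\int(1-f(x))\,\mathrm{d}x)$ with $f(x)=M_H(s\tilde{G}_k\lVert x\rVert^{-\alpha_k})$ and passing to polar coordinates over the annulus yields the factor $\exp(-2\pi\tilde{\lambda}_k\int_r^{R_{{\rm max},k}}(1-M_H(s\tilde{G}_k v^{-\alpha_k}))v\,\mathrm{d}v)$. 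Comparing with the integrand in the statement, the lemma reduces to the single closed-form identity
\[
    M_H(u)=\frac{(2b_k m_k)^{m_k}(1+2b_k u)^{m_k-1}}{\left[(2b_k m_k+\Omega_k)(1+2b_k u)-\Omega_k\right]^{m_k}}.
\]

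The main obstacle is deriving this MGF from the series-form PDF $f_{H_{k,i}}$. Integrating term by term and using $\int_0^\infty e^{-(u+\beta_k-\delta_k)x}x^l\,\mathrm{d}x=l!\,(u+\beta_k-\delta_k)^{-(l+1)}$ gives $M_H(u)=Z_k\sum_{l=0}^{m_k-1}\frac{(-\delta_k)^l(1-m_k)_l}{l!\,(u+\beta_k-\delta_k)^{l+1}}$. The delicate part is collapsing this finite sum: invoking $(1-m_k)_l/l!=(-1)^l\binom{m_k-1}{l}$ turns it into $Z_k\sum_{l=0}^{m_k-1}\binom{m_k-1}{l}\delta_k^l(u+\beta_k-\delta_k)^{-(l+1)}$, which by the binomial theorem equals $Z_k(u+\beta_k)^{m_k-1}/(u+\beta_k-\delta_k)^{m_k}$. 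Substituting $Z_k=(2b_km_k/(2b_km_k+\Omega_k))^{m_k}/(2b_k)$, $\beta_k=1/(2b_k)$, and $\delta_k=\Omega_k/(2b_k(2b_km_k+\Omega_k))$ and simplifying---the factors of $2b_k$ and $(2b_km_k+\Omega_k)^{m_k}$ cancel cleanly---recovers the rational expression above; alternatively this MGF can be cited directly from the shadowed-Rician reference \cite{abdi2003new}. Reinstating $u=s\tilde{G}_k v^{-\alpha_k}$ and multiplying by $e^{-s\hat{\sigma}_k^2}$ then completes the proof.
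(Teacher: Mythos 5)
Your proof is correct and follows essentially the same route as the paper's: factor out the deterministic noise term, treat the interferers as a homogeneous PPP on the annulus beyond the serving distance $r$, average over the i.i.d.\ fading to reduce each per-point factor to the shadowed-Rician MGF, and apply the PGFL in polar coordinates. The only difference is cosmetic---the paper simply cites \cite{abdi2003new} for the closed-form MGF, whereas you derive it from the series PDF via the Pochhammer-to-binomial identity, and your derivation is correct.
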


\begin{proof}
    See Appendix \ref{proof of laplace of interference power}.
\end{proof}

The aggregated interference power is significantly influenced by the fading distribution of the interfering links and the density ${\tilde \lambda}_k$. For instance, as the network density decreases, the interference Laplace increases, which helps to improve the coverage probability. However, the proximity distribution of the BS worsens as the network density scales down, posing a detrimental impact on the coverage probability.

\section{Coverage Probability} \label{sec:coverage probability analysis}
In this section, we first present the exact expression for the coverage probability of the $K$-tier ISTNs. Unfortunately, the exact expression is challenging for analytical computation due to its involvement with intricate integrals. To address this complexity, we introduce an approximate expression for the coverage probability, specifically designed to be computed with significantly less intricacy, particularly under predefined conditions.

\vspace{-0.2cm}

\subsection{Exact Expression}
The following theorem is the main result of this paper. 
\begin{theorem}
    The downlink coverage probability of the typical user is
    \begin{align}
        {\sf  P}^{\sf cov}(T)&=\sum_{k=1}^{K}{\sf P}_{k}^{\sf cov}(T|{\tilde \Phi}_k(\tilde{\mathcal{A}}_k)>0,J=k) \nonumber \\
        &\qquad\times\mathbb{P}[J=k|{\tilde \Phi}_k(\tilde{\mathcal{A}}_k)>0]\mathbb{P}[{\tilde \Phi}_k(\tilde{\mathcal{A}}_k)>0],
    \end{align}
    where $\mathbb{P}[{\tilde \Phi}_k(\tilde{\mathcal{A}}_k)>0]$ and $\mathbb{P}[J=k|{\tilde \Phi}_k(\tilde{\mathcal{A}}_k)>0]$ are given by Lemmas \ref{lem2} and \ref{cell association}. Furthermore, the coverage probability associated to the $k$th tier is given by
    \begin{align}
        &\!\!\!\!\!\!{\sf P}_{k}^{\sf cov}(T|{\tilde \Phi}_k(\tilde{\mathcal{A}}_k)>0,J=k) \nonumber \\
        &\!\!\!\!\!\!=\int_{R_{{{\rm{min}}},k}}^{R_{{{\rm{max}}},k}}\Bigg[1-\sum_{l=0}^{m_k-1}\frac{\zeta_k(l)\Gamma(l+1)}{(\beta_k-\delta_k)^{l+1}}\Bigg(1-\sum_{q=0}^{l}\frac{\nu_k^q}{q!}(-1)^q\;\times \nonumber \\
        &\!\!\!\!\!\frac{{\rm{d}}^q\mathcal{L}_{I_k+\hat{\sigma}_k^2|{\tilde \Phi}_k(\tilde{\mathcal{A}}_k)>0}(s)}{{\rm{d}}s^q}\Bigg|_{s=\nu_k}\Bigg)\Bigg]f_k(r|{\tilde \Phi}_k(\tilde{\mathcal{A}}_k)\!>\!0,J=k){\rm{d}}r, \label{eq:Exact form of conditional coverage probability}
    \end{align}
    where $\zeta_k(l)=\frac{Z_k(-\delta_k)^l(1-m_k)_l}{(l!)^2}$, $\nu_k=(\beta_k-\delta_k)r^{\alpha_k}T$, and $\Gamma(\cdot)$ is the gamma function. \label{theo:exact form of coverage}
\end{theorem}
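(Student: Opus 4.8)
\emph{Proof proposal.} The plan is to assemble the result in three stages: (i) produce the outer three-factor decomposition by conditioning on network visibility, (ii) evaluate the per-tier conditional coverage by exploiting the finite-sum structure of the shadowed-Rician complementary CDF, and (iii) decondition over the serving distance.

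First I would start from the law of total probability in \eqref{eq:Pcov}, which already gives $\mathsf{P}^{\sf cov}(T)=\sum_{k=1}^K\mathsf{P}_k^{\sf cov}(T)\,\mathbb{P}[J=k]$. Since a typical user can be associated with tier $k$ only when at least one tier-$k$ BS is visible, i.e. $\{J=k\}\subseteq\{\tilde{\Phi}_k(\tilde{\mathcal{A}}_k)>0\}$, I would refine $\mathbb{P}[J=k]=\mathbb{P}[J=k\mid\tilde{\Phi}_k(\tilde{\mathcal{A}}_k)>0]\,\mathbb{P}[\tilde{\Phi}_k(\tilde{\mathcal{A}}_k)>0]$ and likewise condition the SINR event on both $J=k$ and visibility. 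These two conditioning factors are supplied by Lemma~\ref{cell association} and the void probability preceding Lemma~\ref{lem2}, which together yield the stated three-factor product.

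The core of the argument is the per-tier term $\mathsf{P}_k^{\sf cov}(T\mid\tilde{\Phi}_k(\tilde{\mathcal{A}}_k)>0,J=k)$. Conditioning on the serving distance $r$, I would rewrite the coverage event from \eqref{eq:SINR simplified} as a threshold on the desired-link fading, $H_{k,1}>T r^{\alpha_k}(I_k+\hat{\sigma}_k^2)$, so that the conditional coverage equals $\mathbb{E}_{I_k}\!\big[1-F_{H_{k,1}}(Tr^{\alpha_k}(I_k+\hat{\sigma}_k^2))\big]$ with $F_{H_{k,1}}$ from \eqref{eq:CDF of SRF}. The decisive step is the integer-order expansion of the lower incomplete gamma function, $\gamma(l+1,z)=\Gamma(l+1)\big(1-e^{-z}\sum_{q=0}^{l}z^q/q!\big)$, applied with $z=\nu_k(I_k+\hat{\sigma}_k^2)$ and $\nu_k=(\beta_k-\delta_k)r^{\alpha_k}T$. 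This converts the complementary CDF into a finite combination of terms of the form $(I_k+\hat{\sigma}_k^2)^q e^{-\nu_k(I_k+\hat{\sigma}_k^2)}$, which is legitimate precisely because $m_k\in\mathbb{Z}^+$ makes the fading sum finite and exact.

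I would then invoke the moment-to-derivative identity $\mathbb{E}\big[(I_k+\hat{\sigma}_k^2)^q e^{-s(I_k+\hat{\sigma}_k^2)}\big]=(-1)^q\,\tfrac{\mathrm{d}^q}{\mathrm{d}s^q}\mathcal{L}_{I_k+\hat{\sigma}_k^2}(s)$, evaluated at $s=\nu_k$, with the Laplace transform supplied by Lemma~\ref{Laplace of interference power plus noise}; note the normalized noise is already absorbed into this transform, so no separate noise bookkeeping is needed. This reproduces the bracketed expression in \eqref{eq:Exact form of conditional coverage probability}. Finally I would decondition by integrating against the distance PDF $f_k(r\mid\tilde{\Phi}_k(\tilde{\mathcal{A}}_k)>0,J=k)$ from Lemma~\ref{nearest distribution of kth tier} over $[R_{\mathrm{min},k},R_{\mathrm{max},k}]$. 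The main obstacle I anticipate is justifying the two interchanges---pulling the finite fading sum through the interference expectation, and swapping expectation with differentiation so that moments map to Laplace-transform derivatives; both are harmless here because the sums are finite and the integrand is dominated, but they must be stated carefully to keep the expression exact rather than merely formal.
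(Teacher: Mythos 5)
Your proposal is correct and follows essentially the same route as the paper's proof: the same rewriting of the coverage event as a tail probability on $H_{k,1}$, the same integer-order expansion of the lower incomplete gamma function exploiting $m_k\in\mathbb{Z}^+$, the same moment-to-derivative Laplace identity applied to $I_k+\hat{\sigma}_k^2$ via Lemma \ref{Laplace of interference power plus noise}, and the same deconditioning against the distance PDF of Lemma \ref{nearest distribution of kth tier}. Your explicit treatment of the outer three-factor decomposition (via $\{J=k\}\subseteq\{\tilde{\Phi}_k(\tilde{\mathcal{A}}_k)>0\}$) and of the interchange-of-limits justifications is slightly more careful than the paper, which leaves both implicit, but the substance is identical.
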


 \begin{proof}
     See Appendix \ref{proof of Theorem 2}.
 \end{proof}

Our first key observation is the difference from \cite{andrews2011tractable} in the interference-limited regime. In \cite{andrews2011tractable}, the coverage probability does not depend on the density of BSs when noise is ignored. Even in the interference-limited regime, the coverage probability varies depending on the density of BSs. This captures a realistic environment by modeling the terrestrial network on the sphere of the Earth. Intuitively, since the sum of the probabilities of being associated with the $k$th tier of Lemma \ref{cell association} is 1 for all tiers, the coverage probability of Theorem \ref{theo:exact form of coverage} may be considered an internally dividing point for every single tier at first glance. However, due to the difference in conditional nearest BS distance distribution between Lemmas \ref{conditional nearest distance distribution} and \ref{nearest distribution of kth tier} for single tier and multi-tier, Theorem \ref{theo:exact form of coverage} properly reflects the effect of macro-diversity that occurs by cooperating multiple networks. This is because when an outage occurs in one network, there is still a possibility that communication can succeed in the remaining networks. This possibility is reflected in terms of different tiers in Lemma \ref{nearest distribution of kth tier}. Therefore, as the number of tiers increases, the effect of macro-diversity increases because they do not share a spectrum. This macro-diversity effect is verified through simulation results in Section \ref{sec:simulation}.

\vspace{-0.2cm}

\subsection{Approximated Expression}
The exact expression of Theorem \ref{theo:exact form of coverage} is computationally expensive when $m_k$ is large due to the derivative of $\mathcal{L}_{I_k+\hat{\sigma}_k^2|{\tilde \Phi}_k(\tilde{\mathcal{A}}_k)>0}(s)$. Therefore, we also provide an approximated form of the coverage probability.
 
\begin{theorem}
    The approximated coverage probability of the typical user conditioned on that a typical user is associated with the $k$th tier and at least one BS is visible in the $k$th tier is obtained as
    \begin{align}
        &{\sf P}_{k}^{\sf cov}(T|{\tilde \Phi}_k(\tilde{\mathcal{A}}_k)>0,J=k) \nonumber \\
        &\approx\int_{R_{{{\rm{min}}},k}}^{R_{{{\rm{max}}},k}}\left[1-\sum_{l=0}^{m_k-1}\frac{\zeta_k(l)\Gamma(l+1)}{(\beta_k-\delta_k)^{l+1}}\sum_{q=0}^{l+1}{l+1 \choose q}(-1)^q\right. \nonumber \\
        &\;\times\mathcal{L}_{I_k+\hat{\sigma}_k^2|{\tilde \Phi}_k(\tilde{\mathcal{A}}_k)>0}(\rho_k q)\Bigg]f_k(r|{\tilde \Phi}_k(\tilde{\mathcal{A}}_k)>0,J=k){\rm{d}}r,
    \end{align}
    where $\rho_k=\kappa(\beta_k-\delta_k)r^{\alpha_k}T$, and $\kappa$ is the tuning parameter. This tuning parameter $\kappa$ \label{Approximation of coverage probability} is adjustable only in the range of
    \begin{equation}
        \left(\Gamma(l+2)\right)^{-\frac{1}{l+1}}\le\kappa\le 1.
    \end{equation} \label{theo:approximation of coverage}
\end{theorem}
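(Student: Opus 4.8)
The plan is to retrace the derivation of the exact expression in Theorem~\ref{theo:exact form of coverage} up to the point where the derivatives of the interference Laplace transform appear, and then replace the single step that produces those derivatives by a tractable surrogate. Conditioned on the serving distance $r$ and the association event, the coverage probability is $\mathbb{P}[{\rm{SINR}}_k > T] = \mathbb{E}_{I_k}\!\left[1 - F_{H_{k,1}}\!\left(Tr^{\alpha_k}(I_k + \hat{\sigma}_k^2)\right)\right]$, since ${\rm{SINR}}_k > T$ is equivalent to $H_{k,1} > Tr^{\alpha_k}(I_k + \hat{\sigma}_k^2)$. Substituting the shadowed-Rician CDF \eqref{eq:CDF of SRF} and writing $y = \nu_k(I_k + \hat{\sigma}_k^2)$ with $\nu_k = (\beta_k-\delta_k)r^{\alpha_k}T$, each summand over $l$ contains the regularized lower incomplete gamma function $\gamma(l+1,y)/\Gamma(l+1)$. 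In the exact theorem this term is expanded as $1 - e^{-y}\sum_{q=0}^{l}y^q/q!$, and its expectation over $I_k$ produces the $q$-th derivatives of $\mathcal{L}_{I_k+\hat{\sigma}_k^2|{\tilde \Phi}_k(\tilde{\mathcal{A}}_k)>0}$ at $s=\nu_k$, which is precisely the computationally expensive factor we wish to avoid.

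The central idea is to replace $\gamma(l+1,y)/\Gamma(l+1)$, the CDF of a unit-rate Erlang variable with $l+1$ stages, by the power form $(1-e^{-\kappa y})^{l+1}$. This is legitimized by Alzer's inequality, which states that for integer shape $l+1$ and all $y>0$,
\[
\left(1 - e^{-\kappa_\star y}\right)^{l+1} \le \frac{\gamma(l+1,y)}{\Gamma(l+1)} \le \left(1 - e^{-y}\right)^{l+1},
\]
with $\kappa_\star = \left(\Gamma(l+2)\right)^{-1/(l+1)}$. Any $\kappa$ in the interval $[\kappa_\star, 1]$ therefore yields a value lying between these two bounds, which is exactly the admissible tuning range stated in the theorem; the endpoints recover the lower and upper Alzer bounds, and for $l=0$ both collapse to the exact relation $\kappa=1$. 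After this substitution I would apply the binomial theorem, $(1-e^{-\kappa y})^{l+1} = \sum_{q=0}^{l+1}\binom{l+1}{q}(-1)^q e^{-\kappa q y}$, and exchange the expectation with the finite sum. Since $\kappa q y = \rho_k q (I_k + \hat{\sigma}_k^2)$ with $\rho_k = \kappa\nu_k$, each term $\mathbb{E}_{I_k}[e^{-\rho_k q(I_k+\hat{\sigma}_k^2)}]$ is precisely $\mathcal{L}_{I_k+\hat{\sigma}_k^2|{\tilde \Phi}_k(\tilde{\mathcal{A}}_k)>0}(\rho_k q)$ from Lemma~\ref{Laplace of interference power plus noise}, eliminating all derivatives. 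Integrating the resulting expression against the serving-distance density $f_k(r|{\tilde \Phi}_k(\tilde{\mathcal{A}}_k)>0,J=k)$ of Lemma~\ref{nearest distribution of kth tier} over $(R_{{\rm{min}},k}, R_{{\rm{max}},k})$ gives the claimed approximation.

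The main obstacle lies not in the algebra, which is routine once the surrogate is in place, but in justifying and interpreting the admissible window for $\kappa$. Alzer's bracketing is sharpest term by term, and its lower constant $\kappa_\star=(\Gamma(l+2))^{-1/(l+1)}$ depends on $l$ and decreases in $l$, whereas a single $\kappa$ enters every summand through $\rho_k$. I would therefore present the stated interval as the per-summand validity window, inside which the $l$-th contribution is pinned between its lower bound $(1-e^{-\kappa_\star y})^{l+1}$ and its upper bound $(1-e^{-y})^{l+1}$, and treat the common choice of $\kappa$ across the finitely many $l=0,\dots,m_k-1$ as the controlled approximation the theorem asserts, with $\kappa$ tuned to balance the residual error among the terms. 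A final consistency check is that the conditioning is preserved throughout: the Laplace transform of Lemma~\ref{Laplace of interference power plus noise} already encodes the event ${\tilde \Phi}_k(\tilde{\mathcal{A}}_k)>0$ together with the exclusion of the serving BS beyond distance $r$, so the expectation over $I_k$ is taken under exactly the same conditional law as in the exact derivation, and only the fading-CDF factor is replaced.
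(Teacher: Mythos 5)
Your proposal is correct and follows essentially the same route as the paper's own proof: both replace the regularized lower incomplete gamma function $\gamma(l+1,\cdot)/\Gamma(l+1)$ by the surrogate $\left(1-e^{-\kappa(\beta_k-\delta_k)x}\right)^{l+1}$ bracketed between the bounds with constants $\left(\Gamma(l+2)\right)^{-1/(l+1)}$ and $1$ (the paper states these bounds without naming Alzer), then apply the binomial expansion and identify each expectation $\mathbb{E}\left[e^{-\rho_k q(I_k+\hat{\sigma}_k^2)}\right]$ with the Laplace transform of Lemma \ref{Laplace of interference power plus noise} before integrating against the density of Lemma \ref{nearest distribution of kth tier}. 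Your additional remark on the $l$-dependence of the lower constant versus a single common $\kappa$ is a point the paper only addresses implicitly (``$\kappa$ must be tuned''), but it does not change the argument.
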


\begin{proof}
    See Appendix \ref{Proof of Theorem 3}.
\end{proof}

This approximated form can be calculated faster than the exact form as $m_k$ becomes larger. Therefore, the approximated expression can be useful in environment where shadowing is not heavy. However, the approximation is very close to the exact form, and we verify how close it is through simulation results in Section \ref{sec:simulation}.
\vspace{-0.2cm}

\subsection{Special Case}
To provide more clear intuition on the derived coverage probability expressions, we establish the approximated expression of the coverage probability in some special cases, which is stated in the following corollary.  

 \begin{corollary}
    When $K=2$ with $K_{\sf T}=1$ and $K_{\sf S}=1$, $\alpha_k=2$, and $H_{k,i}$ are distributed as Rayleigh, the coverage probability in the interference-limited regime is approximated as
    \begin{align}
        &{\sf  P}^{\sf cov}(T)\approx \left[1-e^{-\pi\tilde{\lambda}_1\left(R_{{\rm{max}},1}^2-R_{{\rm{min}},1}^2\right)}\right] \times \nonumber \\
        &\left[\frac{\Xi_1 e^{\chi_1}}{\psi_1+\omega_1}e^{\frac{(\psi_1+\omega_1)\left(L_{1,2}^2+R_{{\rm{min}},1}^2\right)}{2}}\sinh\left(\frac{(\psi_1+\omega_1)\left(L_{1,2}^2-R_{{\rm{min}},1}^2\right)}{2}\right)\right.\nonumber\\
        &\left.\quad+\frac{\Xi_1 e^{\chi}}{\psi_1+\omega}e^{\frac{(\psi_1+\omega)\left(U_{1,2}^2+L_{1,2}^2\right)}{2}}\sinh\left(\frac{(\psi_1+\omega)\left(U_{1,2}^2-L_{1,2}^2\right)}{2}\right)\right]\nonumber\\
        &\qquad\quad+\left[1-e^{-\pi\tilde{\lambda}_2\left(R_{{\rm{max}},2}^2-R_{{\rm{min}},2}^2\right)}\right]\times\nonumber\\
        &\left[\frac{\Xi_2 e^{\chi}}{\psi_2+\mu}e^{\frac{(\psi_2+\mu)\left(R_{{\rm{max}},2}^2+R_{{\rm{min}},2}^2\right)}{2}}\sinh\left(\frac{(\psi_2+\mu)\left(R_{{\rm{max}},2}^2-R_{{\rm{min}},2}^2\right)}{2}\right)\right],
    \end{align}
    where $\chi_k=\pi\tilde{\lambda}_k R_{{\rm{min}},k}^2$ and $\chi=\chi_1+\chi_2$. In addition, $\omega_j=-\pi\tilde{\lambda}_j\frac{P_j}{P_1}\frac{G_{j,1}}{G_{1,1}}\frac{B_j}{B_1}$, $\mu_j=-\pi\tilde{\lambda}_j\frac{P_j}{P_2}\frac{G_{j,1}}{G_{2,1}}\frac{B_j}{B_2}$, $\omega=\omega_1+\omega_2$, and $\mu=\mu_1+\mu_2$. Furthermore,  
        \begin{equation}
        \Xi_k=\frac{2\pi\tilde{\lambda}_k}{1-e^{-\pi\tilde{\lambda}_k\left(R_{{\rm{max}},k}^2-R_{{\rm{min}},k}^2\right)}}.
    \end{equation} \label{Special case} 
    and
    \begin{equation}
        \psi_k=-\pi\tilde{\lambda}_k T\tilde{G}_k\ln\left(\frac{T\tilde{G}_k+\left(\frac{R_{{\rm{max}},k}}{R_{{\rm{min}},k}+\varepsilon_k}\right)^2}{1+T\tilde{G}_k}\right),
    \end{equation}
   where $\varepsilon_k$ is an arbitrary correction constant for approximation.

\end{corollary}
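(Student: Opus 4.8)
The plan is to take the approximate conditional coverage of Theorem~2, collapse every fading sum by invoking the Rayleigh assumption, reduce the interference Laplace transform of Lemma~5 to an elementary closed form under $\alpha_k=2$ and the interference-limited assumption $\hat\sigma_k^2=0$, and then integrate the resulting kernels against the association densities of Lemmas~3 and~4, finally reassembling the two-tier sum of Theorem~1. I would begin by setting $m_k=1$, so that the shadowed-Rician PDF collapses to its single $l=0$ term. A short computation gives $\beta_k-\delta_k=(2b_k+\Omega_k)^{-1}$ and $\zeta_k(0)=Z_k=(2b_k+\Omega_k)^{-1}$, whence the coefficient $\zeta_k(0)\Gamma(1)/(\beta_k-\delta_k)=1$. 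Using $\mathcal{L}_{I_k}(0)=1$ together with $\binom{1}{0}\mathcal{L}_{I_k}(0)-\binom{1}{1}\mathcal{L}_{I_k}(\rho_k)=1-\mathcal{L}_{I_k}(\rho_k)$, the bracket of Theorem~2 reduces to $\mathcal{L}_{I_k}(\nu_k)$ with $\kappa=1$; here the approximation of Theorem~2 becomes exact and reproduces the familiar Rayleigh identity $\mathbb{P}[\mathrm{SINR}_k>T\mid r]=\mathcal{L}_{I_k+\hat\sigma_k^2}(\nu_k)$.

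Next I would evaluate the Laplace transform explicitly. With $m_k=1$ the fraction inside Lemma~5 telescopes to $\bigl(1+(2b_k+\Omega_k)s\tilde G_k v^{-\alpha_k}\bigr)^{-1}$, so the integrand becomes the standard Rayleigh kernel with effective mean $\bar H_k=2b_k+\Omega_k$. Setting $\hat\sigma_k^2=0$ and $\alpha_k=2$, the substitution $u=v^2$ turns the radial integral into a single logarithm, and evaluating at $s=\nu_k=(\beta_k-\delta_k)r^{2}T=r^{2}T/\bar H_k$ cancels $\bar H_k$ to give
\[
\mathcal{L}_{I_k}(\nu_k)=\exp\!\left(-\pi\tilde\lambda_k T\tilde G_k\,r^{2}\,\ln\frac{R_{\mathrm{max},k}^{2}/r^{2}+T\tilde G_k}{1+T\tilde G_k}\right).
\]

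The single genuine approximation --- and the step I expect to be the main obstacle --- is that this logarithm still depends on $r$ through $R_{\mathrm{max},k}^{2}/r^{2}$, which would block a closed form. I would freeze that argument by replacing $r$ with the representative value $R_{\mathrm{min},k}+\varepsilon_k$ inside the logarithm only, turning the exponent into one that is exactly linear in $r^{2}$, namely $\mathcal{L}_{I_k}(\nu_k)\approx e^{\psi_k r^{2}}$ with $\psi_k$ precisely the constant stated in the corollary. The delicate points are justifying that this linearization is accurate across the relevant annulus and pinning down the correction constant $\varepsilon_k$; this is where the ``$\approx$'' in the statement originates, and all remaining steps are exact.

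Finally I would assemble the tiers through Theorem~1. In the density of Lemma~4 the factor $1/\mathbb{P}[J=k\mid\tilde\Phi_k(\tilde{\mathcal A}_k)>0]$ cancels the association probability, leaving the prefactor $\Xi_k$ and the void factor $\mathbb{P}[\tilde\Phi_k(\tilde{\mathcal A}_k)>0]=1-e^{-\pi\tilde\lambda_k(R_{\mathrm{max},k}^{2}-R_{\mathrm{min},k}^{2})}$, so each tier reduces to $\Xi_k[1-e^{\cdots}]\int r\,e^{\psi_k r^{2}}\prod_j {\sf P}_{k,j}\,\mathrm{d}r$. Because $\alpha_k=2$, the middle branch of ${\sf P}_{k,j}$ in Lemma~3 is itself Gaussian in $r^{2}$, equal to $e^{\omega_j r^{2}+\chi_j}$ for the terrestrial reference and $e^{\mu_j r^{2}+\chi_j}$ for the satellite reference, with the self-term contributing $\omega_k=-\pi\tilde\lambda_k$ (resp.\ $\mu_k$). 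I would then split the radial integral at the breakpoints $L_{k,j},U_{k,j}$: for the terrestrial tier into $[R_{\mathrm{min},1},L_{1,2}]$ (exponent $\psi_1+\omega_1$) and $[L_{1,2},U_{1,2}]$ (exponent $\psi_1+\omega$), with the tail vanishing; and for the satellite tier, after checking $L_{2,1}\le R_{\mathrm{min},2}$ and $U_{2,1}\ge R_{\mathrm{max},2}$, over the whole annulus with exponent $\psi_2+\mu$. Each piece is the elementary integral $\int_a^b r\,e^{c r^{2}}\mathrm{d}r=\tfrac{1}{2c}(e^{cb^{2}}-e^{ca^{2}})$, and rewriting $\tfrac12(e^{A}-e^{B})=e^{(A+B)/2}\sinh\tfrac{A-B}{2}$ produces exactly the $\frac{\Xi_k e^{\cdots}}{\psi_k+\cdots}e^{\cdots}\sinh(\cdots)$ terms; collecting them with the void factors yields the claimed expression.
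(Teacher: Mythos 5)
Your proposal is correct and takes essentially the same route as the paper, whose proof simply defers to Appendix E of \cite{park2022tractable} and singles out one modification: approximating the interference Laplace exponent by evaluating its logarithm at $R_{{\rm min},k}+\varepsilon_k$, which is exactly the linearization step you identify as the lone genuine approximation. Your Rayleigh reduction ($m_k=1$, $\kappa=1$ making Theorem \ref{theo:approximation of coverage} exact), closed-form Laplace under $\alpha_k=2$, cancellation of the association probability from Lemma \ref{nearest distribution of kth tier}, splitting at the breakpoints $L_{k,j},U_{k,j}$ of Lemma \ref{cell association}, and the $\int_a^b re^{cr^2}{\rm d}r$--to--$\sinh$ rewriting fill in precisely the details the paper outsources, and they reproduce the stated expression.
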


\begin{proof}
The proof is analogous the proof in Appendix E of \cite{park2022tractable}. The key difference with \cite{park2022tractable} is the approximation method of the interference Laplace term in Lemma \ref{Laplace of interference power plus noise} using $R_{{\rm{min}},k}+\varepsilon_k$ instead of $R_{{\rm{min}},k}$. By tuning parameter $\epsilon_k$, we can make a tight approximation.
\end{proof}

Corollary 1 provides valuable insights into how the coverage probability varies with critical network parameters. Similar to the prior result in \cite{park2022tractable}, which is an alliterative analytical expression of Corollary 1, we note that in each network tier, the optimal density of satellite nodes for maximizing coverage probability diminishes with increasing network altitude. This result suggests that terrestrial networks necessitate a higher density of BSs than satellite networks to enhance coverage probability effectively.

Furthermore, Corollary 1 elucidates the impact of the biasing factor on network density. Essentially, increasing the biasing factor from 1 to $B_k>1$ can be conceptualized as adjusting the network density from  ${\tilde \lambda}_k$ to ${\tilde \lambda}_k B_k^{\frac{2}{\alpha}}$. Consequently, elevating the biasing factor in each tier effectively amplifies network density, thereby augmenting the likelihood of network connection for typical users. This observation resonates with our intuitive understanding of the biasing factor. The simulation section will provide a more detailed exploration of the biasing effect.


\section{Simulation Results} \label{sec:simulation}
In this section, we validate our derived expressions for the Theorem through a comparison with Monte Carlo simulations. The shadowed-Rician fading, employed as the fading channel model in our study, offers versatility by reflecting various fading channels based on parameter configurations. Consequently, we employ Rayleigh fading for the terrestrial network and the LMS channel parameters specified in Table \ref{tab:simulation parameters} for the satellite network. Given our assumption that $m$ is an integer, we adjust the actual LMS channel parameters to the nearest integer value for $m$. Unless explicitly mentioned otherwise, we utilize the parameters listed in Table \ref{tab:simulation parameters} for our simulations.

For the parameters pertaining to the satellite network, we draw upon values extracted from the link budget filed with the FCC \cite{FCC2023direct}. This filing was submitted by SpaceX for the operation of a direct-to-cellular system. Notably, we assume omnidirectional reception for users' antennas with a gain set to 0 dBi, denoted as $G_U=\bar{G}_U=1$. Our simulation is structured in two tiers to elucidate discernible trends, although it is inherently adaptable to multiple tiers without loss of generality.

\begin{table}[t!]
    \centering
    \caption{Simulation Parameters}
    \label{tab:simulation parameters}
    \begin{tabular}{c|c|c}
    \hline
    Parameter & Terrestrial & Satellite \\
    \hline
        Carrier frequency $f_k$ [GHz] & 3.5 & 1.9925\\
        \hline
        Radius of the Earth $R_{\sf E}$ [km] & \multicolumn{2}{c}{6371}\\
        \hline
        Path-loss exponent $\alpha_k$ & 4 & 2 \\
        \hline
        Speed of light $c$ [km/s] & \multicolumn{2}{c}{$3\times 10^5$}\\
        \hline
        Noise spectral density $N_0$ [dBm/Hz] & \multicolumn{2}{c}{-174}\\
        \hline
        Tx antenna gain for desired link $G_k^t$ [dBi] & 0 & 38\\
        \hline
        Tx antenna gain for interference $\bar{G}_k^t$ [dBi] & 0 & 28\\
        \hline
        Tx power $P_k$ [dBm] & 46 & 50\\
        \hline
        Bandwidth $W_k$ [MHz] & 100 & 5\\
        \hline
        Altitude $h_k$ [km] & 0.03 & 530\\
        \hline
        {\color{black} Visible elevation angle} $\theta_{k}$ [rad] & \multicolumn{2}{c}{$\theta_{k}^{\rm{max}}$} \\
        \hline
    \end{tabular}
    
    \medskip
    
    \begin{tabular}{c|c|c|c}
    \hline
    \multicolumn{4}{c}{LMS Channel Parameters \cite{abdi2003new}} \\
        \hline
        Shadowing & $m$ & $b$ & $\Omega$ \\
        \hline
        Frequent Heavy Shadowing (FHS) & 1 & 0.063 & $8.97\times 10^{-4}$\\
        \hline
        Average Shadowing (AS) & 10 & 0.126 & 0.835\\
        \hline
        Infrequent light Shadowing (ILS) & 19 & 0.158 & 1.29\\
        \hline
    \end{tabular} \vspace{-0.23cm}
\end{table}

\begin{figure*}[t!]
  \centering
  \includegraphics[width=0.9\textwidth]{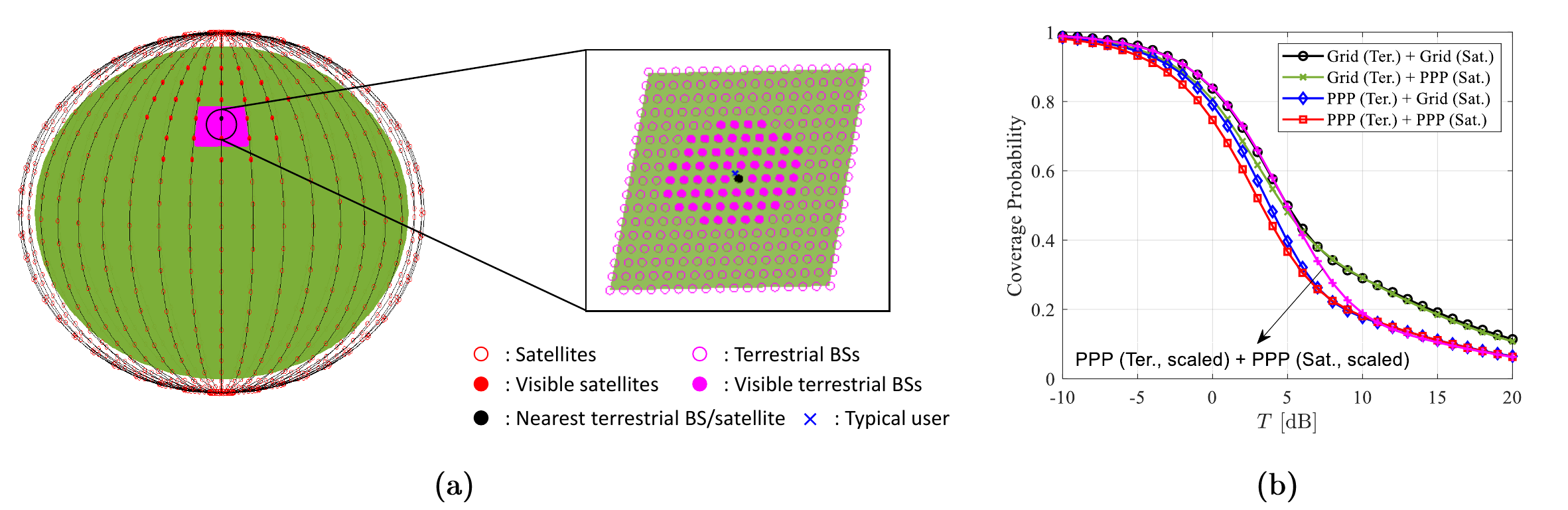}
  \caption{(a) Example of Walker star constellation grid model. The right figure shows a magnified terrestrial network. The green sphere, red unfilled circle, red filled circle, magenta unfilled circle, magenta filled circle, black filled circle, and blue marker represent the Earth, satellites, visible satellites, terrestrial BSs, visible terrestrial BSs, nearest satellite/terrestrial BS, and the typical user, respectively. (b) Coverage probability comparison between the proposed PPP model and the Walker star constellation grid model. The number of satellites is set to 1,000, and the number of orbits is set to 20. The locations of the satellites are placed at 36 degrees north latitude and 126 degrees east longitude. Table \ref{tab:simulation parameters} considers the average shadowing.}
  \label{fig:comparison between grid and PPP}
\end{figure*}

\vspace{-0.2cm}

\subsection{Network Model Validation}
To validate the proposed ISTN model, we compare our PPP-based ISTN model with the classical grid network model, as shown in Fig. \ref{fig:comparison between grid and PPP}-(a). We consider the Walker star constellation for terrestrial and satellite networks in a grid model. Therefore, our grid model differs from the grid model in \cite{andrews2011tractable} in that the height of the terrestrial BS is considered. In the grid model, we assumed that a typical user is randomly located in the area where the reference satellite is the nearest. Therefore, once the user's location is determined, the user is served by the reference satellite or the nearest terrestrial BS. Because the number of visible satellites and terrestrial BSs varies depending on the location of a typical user, we set the density in the PPP-based model to the average value of several snapshots. 

Fig. \ref{fig:comparison between grid and PPP}-(b) shows simulation results for four possible combinations of grid- and PPP-based satellite and terrestial network models. As mentioned in \cite{andrews2011tractable}, the distance between the nearest BS and interference BSs may be close in the PPP-based model, while the grid model ensures that a certain distance separates interference BSs. Therefore, unless the grid spacing is very close, the PPP-based model can serve as the lower bound on the coverage probability for the grid model, as shown in Fig. \ref{fig:comparison between grid and PPP}-(b). Both models have similar coverage tendencies, which validates our PPP-based network model. Additionally, we can similarly match the coverage probability in low SINR threshold regions by adjusting the density of the network. We also emphasize that the PPP-based satellite network model has shown to be accurate for analyzing the commercial LEO satellite network such as Starlink \cite{park2022tractable}. 
\vspace{-0.2cm}

\subsection{Verification and Effect of Shadowing}
\begin{figure}[t!]
  \centering
  \includegraphics[width=0.45\textwidth]{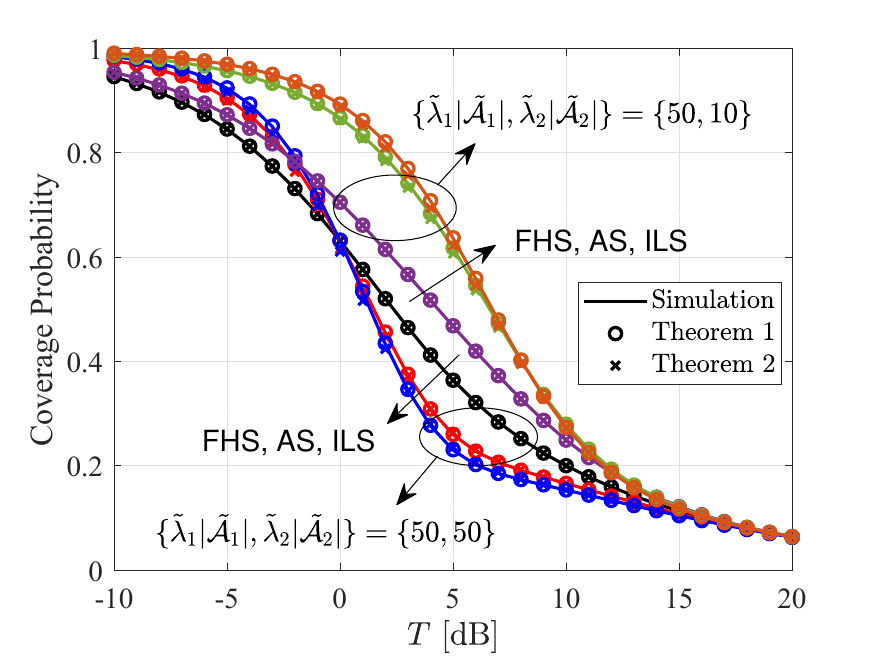}
  \caption{Coverage probability in a two-tier ISTN for verifying Theorem \ref{theo:exact form of coverage} and \ref{theo:approximation of coverage} ($B_1=B_2=1$).} \vspace{-0.28cm}
  \label{fig:verification of Theorem}
\end{figure}

Fig. \ref{fig:verification of Theorem} shows the simulation results to verify Theorem \ref{theo:exact form of coverage} and \ref{theo:approximation of coverage} according to shadowing and satellite density. The analytical expressions derived in Theorem \ref{theo:exact form of coverage} and \ref{theo:approximation of coverage} perfectly match and are close to the simulation results. When $m_2=1$, that is, FHS is considered, the approximated coverage probability matches well with the exact expression of the coverage probability. Therefore, Theorem \ref{theo:exact form of coverage} and \ref{theo:approximation of coverage} are identical for FHS. However, if $m_2\ge 2$, approximation error occurs no matter how well $\kappa$ is adjusted. For AS and ILS, although the approximation errors occur, they are negligible and allow us to obtain coverage probabilities faster than exact form with lower computational complexity. 

When $\tilde{\lambda}_2|\tilde{\mathcal{A}}_2|=50$, in the low SINR threshold region, coverage increases as shadowing becomes lighter, that is, as the LMS channel parameters $m_2$, $b_2$, and $\Omega_2$ increases. This is because the influence of the nearest BS is greater than the interference in a low SINR threshold region. Therefore, a light shadowing environment with a strong LoS path of the nearest BS shows better performance. However, this tendency acts in the opposite direction above a certain SINR threshold, which is due to interference. Interference from satellites other than the nearest BS in a light shadowing environment reduces coverage probability. On the other hand, when $\tilde{\lambda}_2|\tilde{\mathcal{A}}_2|=10$, in all SINR threshold regions, the coverage probability is highest in ILS and lowest in FHS because the number of satellites acting as interference is small from the first.

\begin{figure}[t!]
  \centering
  \includegraphics[width=0.45\textwidth]{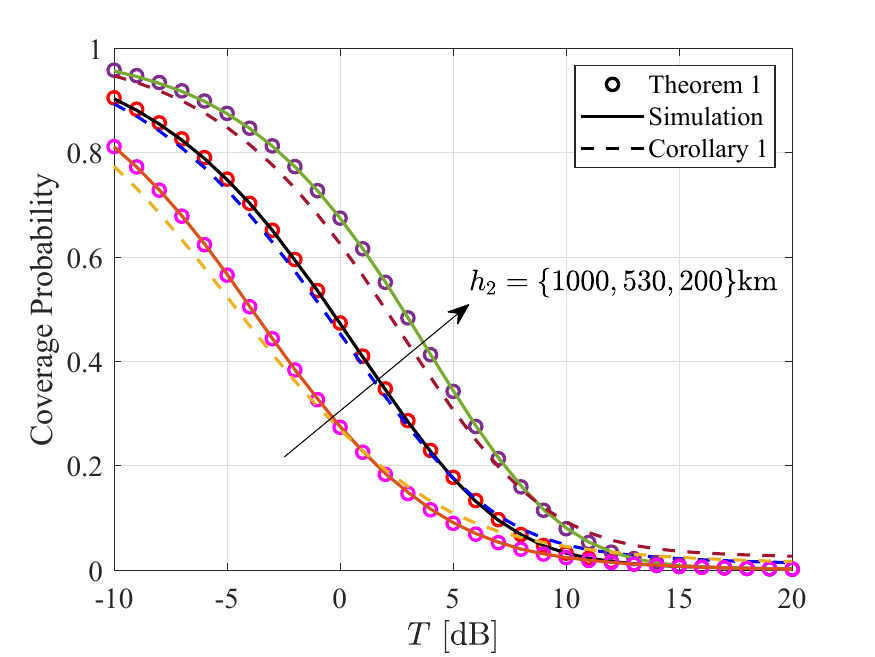}
  \caption{ Coverage probability for verifying Corollary \ref{Special case} ($B_1=B_2=1$, $\tilde{\lambda}_1|\tilde{\mathcal{A}}_1|=\tilde{\lambda}_2|\tilde{\mathcal{A}}_2|=30$).} \vspace{-0.3cm}
  \label{fig:verification of Corollary}
\end{figure}

Fig. \ref{fig:verification of Corollary} shows the simulation results to verify Corollary \ref{Special case} according to the altitude of the satellite. We set the simulation parameters as  $\{\varepsilon_1,\varepsilon_2\}=\{2.9282,1.4089\}$ for $h_2=200$ km, $\{\varepsilon_1,\varepsilon_2\}=\{1.9521,0\}$ for $h_2=530$ km, and $\{\varepsilon_1,\varepsilon_2\}=\{2.1474,1.3535\}$ for $h_2=1000$ km. Compared to the exact form, the coverage probability derived in Corollary \ref{Special case} exhibits similar trends, which shows the tightness of the derived closed form expression in some special cases. As can be seen, we show that lowering satellite altitude provides a higher coverage performance gain for ISTNs. 

\vspace{-0.2cm}

\subsection{Effects of Biasing and Density}
As can be seen in Fig. \ref{fig:verification of Theorem}, the coverage trend depends not only on the channel parameters but also on the density. Additionally, the main parameter that determines coverage performance is the biasing factor. Therefore, we provide simulation results to examine the influence of these two parameters on coverage performance. In this subsection, AS is considered for shadowing. 
\begin{figure}[t!]
  \centering
  \includegraphics[width=0.45\textwidth]{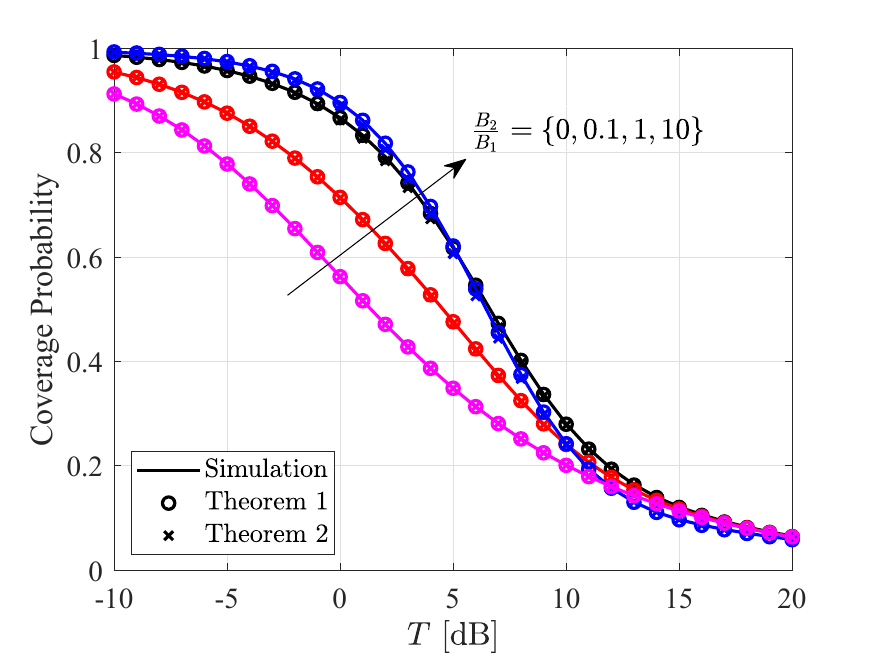}
  \caption{Coverage probability for varying the ratio of the biasing factor in a two-tier ISTN (AS, $\{\tilde{\lambda}_1|\tilde{\mathcal{A}}_1|,\tilde{\lambda}_2|\tilde{\mathcal{A}}_2|\}=\{50,10\}$).} \vspace{-0.2cm}
  \label{fig:Coverage_Probability_biasing}
\end{figure}
\begin{figure}[t!]
  \centering
  \includegraphics[width=0.45\textwidth]{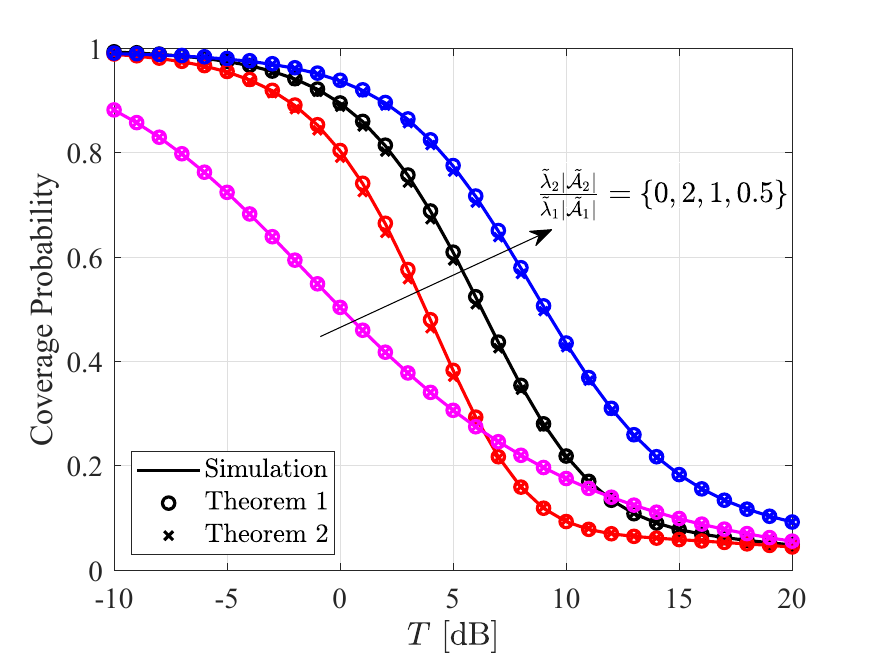}
  \caption{Coverage probability for varying the ratio of the density in a two-tier ISTN (AS, $\tilde{\lambda}_1|\tilde{\mathcal{A}}_1|=10$).}
  \label{fig:Coverage_Probability_density}  \vspace{-0.3cm}
\end{figure}

The simulation results depicted in Fig. \ref{fig:Coverage_Probability_biasing} illustrate the impact of varying biasing factor ratios. A ratio of 0 signifies exclusive consideration of the terrestrial network, while higher ratios denote increased user association with the satellite network. In the terrestrial network, Rayleigh fading ($m_1=1$) is taken into account. When the biasing factor ratio is 0, aligning with the terrestrial network alone, Theorems \ref{theo:exact form of coverage} and \ref{theo:approximation of coverage} yield identical results. At a ratio of 0.1, predominantly terrestrial network associations prevail due to its higher biasing factor, leading to a close match between the theorems. However, as the biasing factor tilts towards the satellite network, there's a threshold beyond which Theorem \ref{theo:approximation of coverage} diverges slightly from Theorem \ref{theo:exact form of coverage}, though still exhibiting notable consistency. Particularly, when $\tilde{\lambda}_1|\tilde{\mathcal{A}}_1|=50$ and $\tilde{\lambda}_2|\tilde{\mathcal{A}}_2|=10$, indicating a relatively lower satellite density compared to terrestrial BS, prioritizing collaboration with less interference-prone satellite networks is advisable.


Next, Fig. \ref{fig:Coverage_Probability_density} illustrates the simulation results based on the density ratio. Similar to Fig. \ref{fig:Coverage_Probability_biasing}, a ratio of 0 indicates consideration solely of the terrestrial network. When the density ratio is 0.5, the coverage performance surpasses that of the terrestrial network alone across all SINR thresholds $T$. Conversely, at density ratios of 1 and 2, beyond a certain threshold, the collaboration of both networks exhibits diminished coverage probability due to increasing interference from additional satellites. However, given the generally higher coverage probability across most threshold ranges, integrating both networks would be advantageous for most systems. As observed in Fig. \ref{fig:Coverage_Probability_biasing} and  Fig. \ref{fig:Coverage_Probability_density}, coverage varies depending on the biasing factor and density, suggesting an optimal ratio for these parameters to maximize coverage under different circumstances.

\begin{figure}[t!]
  \centering
  \includegraphics[width=0.45\textwidth]{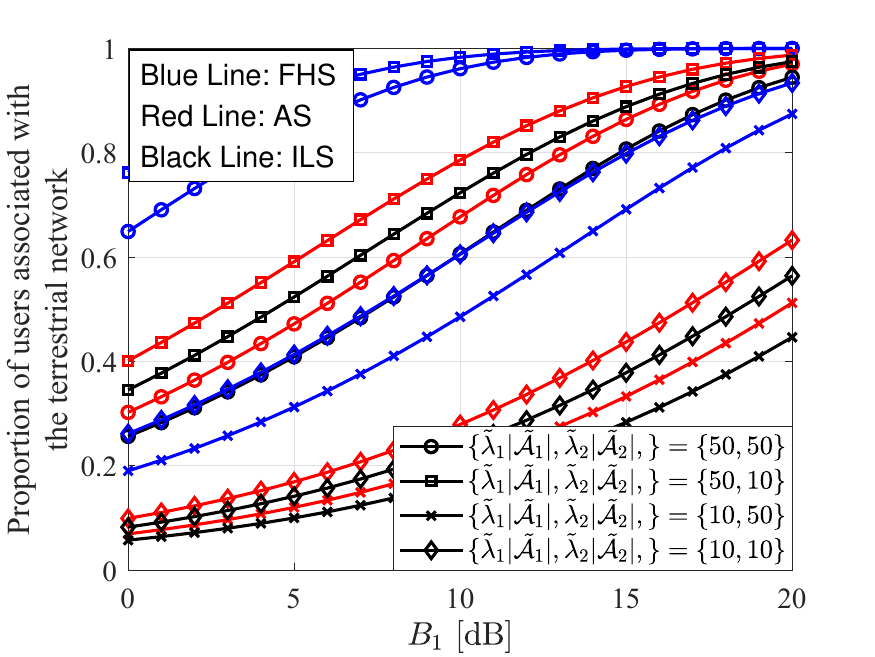}
  \caption{Proportion of users associated with the terrestrial network for varying biasing factor of the terrestrial network and density in a two-tier ISTN ($B_2=1, T=0$ dB).} \vspace{-0.3cm}
  \label{fig:Users_proportion_biasing}
\end{figure}
\begin{figure}[t!]
  \centering
  \includegraphics[width=0.45\textwidth]{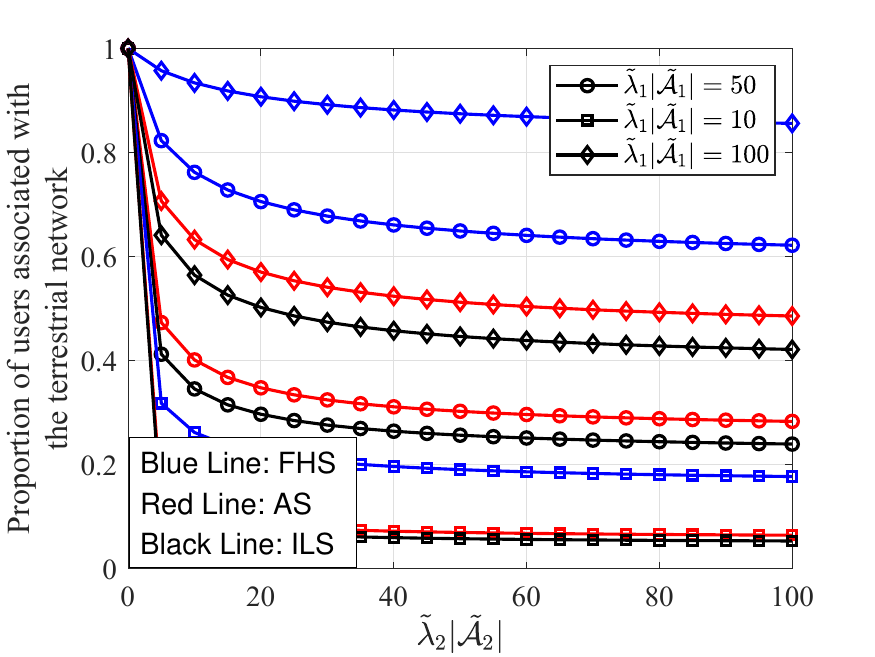}
  \caption{Proportion of users associated with the terrestrial network for varying density in a two-tier ISTN ($B_1=B_2=1, T=0$ dB).} \vspace{-0.2cm}
  \label{fig:Users_proportion_density} 
\end{figure}

Fig. \ref{fig:Users_proportion_biasing} and Fig. \ref{fig:Users_proportion_density} depict the proportion of users linked to the terrestrial network based on key parameters such as shadowing, biasing factor, and density. Specifically, Fig. \ref{fig:Users_proportion_biasing} illustrates how the proportion of users associated with the terrestrial network varies with the biasing factor of the terrestrial network across different shadowing and density scenarios. A higher value of $B_1$ implies a greater bias towards the terrestrial network, consequently leading to an increase in the proportion of users associated with it. Essentially, lighter shadowing corresponds to a stronger LOS path power. Therefore, under similar density conditions, the proportion of users associated with the terrestrial network is highest in scenarios with FHS and lowest in scenarios with Infrequent Light Shadowing ILS. This trend arises because a higher biasing factor amplifies the influence of the terrestrial network, making it more attractive to users. Moreover, in environments with high network density, a significant portion of users tends to be associated with networks exhibiting greater density. Notably, the density of terrestrial BSs significantly impacts the distribution of users among networks, highlighting the sensitivity of terrestrial BS density to the proportion of users associated with each network.

Fig. \ref{fig:Users_proportion_density} illustrates how the proportion of users linked with the terrestrial network varies concerning the satellite density across different terrestrial base station densities. Comparably, Fig. \ref{fig:Users_proportion_biasing} echoes this trend, revealing that the highest proportion of users affiliated with the terrestrial network occurs in FHS scenarios, while the lowest is observed in ILS scenarios, all under identical terrestrial base station densities. Evidently, with an escalation in satellite density, there is a corresponding increase in the proportion of users associated with the satellite network. However, as depicted in the case of $\tilde{\lambda}_1|\tilde{\mathcal{A}}_1|=10$, this proportion reaches a saturation point beyond a certain satellite density. Moreover, the density of terrestrial base stations also significantly influences the proportion of users linked to the terrestrial network. Consequently, both the terrestrial and satellite network densities exert substantial impacts on coverage metrics.

\section{Conclusion} \label{sec:conclusion}

In this paper, we have introduced a analytical method for assessing downlink coverage performance of the $K$-tier ISTNs. Our approach involves modeling the spatial distribution of BSs of each network tier  using homogeneous PPPs on concentric spheres. This modeling accommodates multiple tiers within both terrestrial and satellite networks while maintaining identical network geometry shapes. Under this unified network model, we have derived the expressions for the downlink coverage probability of ISTNs, incorporating essential network design parameters such as network density, path-loss exponent, fading parameters, and network association bias factors. Through simulations, we validate the accuracy of these derived expressions and analyze the impacts of various network parameters. Our tractable approach provides valuable insights for system design and parameter optimization in multi-band ISTN deployment, facilitating informed decision-making and efficient network management.

{\appendices

\section{Proof of Lemma \ref{conditional nearest distance distribution}} \label{proof of conditional nearest distance distribution}
The conditional nearest BS distance distribution of the $k$th tier can be calculated from the complementary cumulative distribution function (CCDF) conditioned on that ${\tilde \Phi}_k(\tilde{\mathcal{A}}_k)>0$. Let $D_k$ be the distance between the typical user and the nearest BS of the $k$th tier. The conditional CCDF of the nearest BS distance of the $k$th tier can be obtained as
\begin{align}
    F_k^c(r|{\tilde \Phi}_k(\tilde{\mathcal{A}}_k)>0)    &=\mathbb{P}[D_k>r|{\tilde \Phi}_k(\tilde{\mathcal{A}}_k)>0] \nonumber \\
    &\overset{(a)}{=}\frac{\mathbb{P}[{\tilde \Phi}_k(\tilde{\mathcal{A}}_k^r)=0,{\tilde \Phi}_k(\tilde{\mathcal{A}}_k)>0]}{\mathbb{P}[{\tilde \Phi}_k(\tilde{\mathcal{A}}_k)>0]} \nonumber \\
    &\overset{(b)}{=}\frac{\mathbb{P}[{\tilde \Phi}_k(\tilde{\mathcal{A}}_k^r)=0]\mathbb{P}[{\tilde \Phi}(\tilde{\mathcal{A}}_k\backslash\tilde{\mathcal{A}}_k^r)>0]}{\mathbb{P}[{\tilde \Phi}_k(\tilde{\mathcal{A}}_k)>0]} \nonumber \\
    &=\frac{\mathbb{P}[{\tilde \Phi}_k(\tilde{\mathcal{A}}_k^r)=0]-\mathbb{P}[{\tilde \Phi}_k(\tilde{\mathcal{A}}_k)=0]}{1-\mathbb{P}[{\tilde \Phi}_k(\tilde{\mathcal{A}}_k)=0]} \nonumber \\
    &\overset{(c)}{=}\frac{e^{-\tilde{\lambda}_k\pi(r^2-R_{{{\rm{min}},k}}^2)}-e^{-\tilde{\lambda}_k\pi(R_{{{\rm{max}},k}}^2-R_{{{\rm{min}},k}}^2)}}{1-e^{-\tilde{\lambda}_k\pi(R_{{{\rm{max}},k}}^2-R_{{{\rm{min}},k}}^2)}},
\end{align}
where $\tilde{\mathcal{A}}_k^r$ is the partial spherical cap defined as
\begin{equation}
    \tilde{\mathcal{A}}_k^r=\left\{ {\bf \tilde s}_k\in\mathbb{R}^2:\lVert{\bf \tilde s}_k-\tilde{\mathbf{u}}_1\rVert\le r\right\}\subset\tilde{\mathcal{A}}_k,
\end{equation}
(a) is by Bayes' theorem, (b) follows from the independence between disjoint sets ${\tilde \Phi}_k(\tilde{\mathcal{A}}_k^r)$ and ${\tilde \Phi}_k(\tilde{\mathcal{A}_k}\backslash\tilde{\mathcal{A}}_k^r)$ and (c) is by the void probability of Lemma 1. The conditional nearest BS distance distribution of the $k$th tier can be calculated by taking the derivative with respect to $r$, which completes the proof. 

\section{Proof of Lemma \ref{cell association}} \label{proof of cell association}
In order for a typical user to be associated with the $k$th tier, the ERP of the $k$th tier must be the largest. Namely, $P^{\rm{eff}}_k>P^{\rm{eff}}_j$ for all $j\in\{1,...,K\}\backslash \{k\}$. Therefore, Lemma \ref{cell association} is calculated as
\begin{align}
    &\mathbb{P}[J=k|{\tilde \Phi}_k(\tilde{\mathcal{A}}_k)>0]\nonumber\\
    &=\mathbb{E}\left[\mathbb{P}\left[P_k^{\rm{eff}}>\max_{j,j\ne k}P_j^{\rm{eff}}\Big|{\tilde \Phi}_k(\tilde{\mathcal{A}}_k)>0\right]\right]\nonumber\\
    &=\mathbb{E}\left[\prod_{j=1,j\ne k}^{K}\mathbb{P}\left[P_k^{\rm{eff}}>P_j^{\rm{eff}}\Big|{\tilde \Phi}_k(\tilde{\mathcal{A}}_k)>0\right]\right]\nonumber\\
    &=\mathbb{E}\left[\prod_{j=1,j\ne k}^{K}\mathbb{P}\left[D_j>\left(\hat{P}_j\hat{G}_j\hat{B}_j\hat{E}_j\right)^{\frac{1}{\alpha_j}}D_k^{\frac{1}{\hat{\alpha}_j}}\Big|{\tilde \Phi}_k(\tilde{\mathcal{A}}_k)>0\right]\right]\nonumber\\
    &=\int_{R_{{\rm{min}},k}}^{R_{{\rm{max}},k}}\prod_{j=1,j\ne k}^{K}\mathbb{P}\bigg[D_j>\left(\hat{P}_j\hat{G}_j\hat{B}_j\hat{E}_j\right)^{\frac{1}{\alpha_j}}r^{\frac{1}{\hat{\alpha}_j}}\nonumber\\
    &\qquad\qquad\qquad \Big|{\tilde \Phi}_k(\tilde{\mathcal{A}}_k)>0\bigg]f_k(r|{\tilde \Phi}_k(\tilde{\mathcal{A}}_k)>0){\rm{d}}r.
\end{align}
In our model, because there is a range in which the BS can be located, the probability in the last equation varies depending on $r$. The interval of integration is divided into three cases according to the following conditions:
\begin{equation}
    \begin{cases}
    R_{{\rm{max}},j}<\left(\hat{P}_j\hat{G}_j\hat{B}_j\hat{E}_j\right)^{\frac{1}{\alpha_j}}r^{\frac{1}{\hat{\alpha}_j}},\\
    R_{{\rm{min}},j}\le \left(\hat{P}_j\hat{G}_j\hat{B}_j\hat{E}_j\right)^{\frac{1}{\alpha_j}}r^{\frac{1}{\hat{\alpha}_j}}\le R_{{\rm{max}},j},\\
    \left(\hat{P}_j\hat{G}_j\hat{B}_j\hat{E}_j\right)^{\frac{1}{\alpha_j}}r^{\frac{1}{\hat{\alpha}_j}}< R_{{\rm{min}},j}.
    \end{cases}
\end{equation}
Therefore,
\begin{align}
    &\mathbb{P}\left[D_j>\left(\hat{P}_j\hat{G}_j\hat{B}_j\hat{E}_j\right)^{\frac{1}{\alpha_j}}r^{\frac{1}{\hat{\alpha}_j}}\Big|{\tilde \Phi}_k(\tilde{\mathcal{A}}_k)>0\right]\nonumber\\
    &\!\!\!\!\!\!=\begin{cases}
            0, &U_{k,j}<r<R_{{\rm{max}},k}, \\
            e^{-\pi\tilde{\lambda}_j\left[\left(\hat{P}_j\hat{G}_j\hat{B}_j\hat{E}_j\right)^{\frac{2}{\alpha_j}}r^{\frac{2}{\hat{\alpha}_j}}-R_{{\rm{min}},j}^2\right]}, &L_{k,j}<r<U_{k,j},\\
            1, &R_{{\rm{min}},k}<r<L_{k,j},
    \end{cases}
\end{align}
where
\begin{equation}
        \!U_{k,j}=\left(\frac{R_{{\rm{max}},j}^{\alpha_j}}{\hat{P}_j\hat{G}_j\hat{B}_j\hat{E}_j}\right)^{\frac{1}{\alpha_k}},\;{\rm{and}} \; L_{k,j}=\left(\frac{R_{{\rm{min}},j}^{\alpha_j}}{\hat{P}_j\hat{G}_j\hat{B}_j\hat{E}_j}\right)^{\frac{1}{\alpha_k}}.
\end{equation}
When $j=k$, $U_{k,k}=R_{{\rm{max}},k}$, $L_{k,k}=R_{{\rm{min}},k}$, and
\begin{align}
    \mathbb{P}\bigg[D_k>\Big(\hat{P}_k\hat{G}_k\hat{B}_k\hat{E}_k\Big)^{\frac{1}{\alpha_k}}r^{\frac{1}{\hat{\alpha}_k}}\Big|{\tilde \Phi}_k(\tilde{\mathcal{A}}_k)>0\bigg]=e^{-\pi\tilde{\lambda}_k\left(r^{2}-R_{{\rm{min}},k}^2\right)}.
\end{align}
As a result,
\begin{align}
    &\mathbb{P}[J=k|{\tilde \Phi}_k(\tilde{\mathcal{A}}_k)>0]=\frac{2\pi\tilde{\lambda}_k}{1-e^{-\pi\tilde{\lambda}_k\left(R_{{\rm{max}},k}^2-R_{{\rm{min}},k}^2\right)}}\int_{R_{{\rm{min}},k}}^{R_{{\rm{max}},k}}\nonumber\\
    &r\prod_{j=1}^{K}\mathbb{P}\left[D_j>\left(\hat{P}_j\hat{G}_j\hat{B}_j\hat{E}_j\right)^{\frac{1}{\alpha_j}}r^{\frac{1}{\hat{\alpha}_j}}\Big|{\tilde \Phi}_k(\tilde{\mathcal{A}}_k)>0\right]{\rm{d}}r.
\end{align}

\section{Proof of Lemma \ref{nearest distribution of kth tier}} \label{proof of neareast distribtuion of kth tier}
We start by finding the conditional CCDF similar to the proof of Lemma \ref{conditional nearest distance distribution} in Appendix \ref{proof of conditional nearest distance distribution}. Therefore,
\begin{align}
    \!\!\!\!F_k^c(r|{\tilde \Phi}_k(\tilde{\mathcal{A}}_k)>0,J=k)&=\mathbb{P}[D_k>r|{\tilde \Phi}_k(\tilde{\mathcal{A}}_k)>0,J=k] \nonumber \\
    &=\frac{\mathbb{P}[D_k>r,J=k|{\tilde \Phi}_k(\tilde{\mathcal{A}}_k)>0]}{\mathbb{P}[J=k|{\tilde \Phi}_k(\tilde{\mathcal{A}}_k)>0]}.
\end{align}
The joint probability that $D_k>r$ and a typical user is associated with the $k$th tier is
\begin{align}
    &\mathbb{P}[D_k>r,J=k|{\tilde \Phi}_k(\tilde{\mathcal{A}}_k)>0]\nonumber\\
    &\!\!=\mathbb{P}[D_k>r,P_k^{\rm{eff}}>\max_{j,j\ne k}P_j^{\rm{eff}}|{\tilde \Phi}_k(\tilde{\mathcal{A}}_k)>0]\nonumber\\
    &\!\!=\int_{r}^{\infty}\!\!\prod_{j=1,j\ne k}^{K}\!\mathbb{P}\left[P_k^{\rm{eff}}\!>\!P_j^{\rm{eff}}\Big|{\tilde \Phi}_k(\tilde{\mathcal{A}}_k)>0\right]\!f_k(x|{\tilde \Phi}_k(\tilde{\mathcal{A}}_k)\!>\!0){\rm{d}}x\nonumber\\
    &\!\!=\int_{r}^{\infty}\prod_{j=1,j\ne k}^{K}\mathbb{P}\left[D_j>\left(\hat{P}_j\hat{G}_j\hat{B}_j\hat{E}_j\right)^{\frac{1}{\alpha_j}}x^{\frac{1}{\hat{\alpha}_j}}\Big|{\tilde \Phi}_k(\tilde{\mathcal{A}}_k)>0\right]\nonumber\\
    &\qquad\qquad\qquad\qquad\qquad\qquad\qquad\times f_k(x|{\tilde \Phi}_k(\tilde{\mathcal{A}}_k)>0){\rm{d}}x\nonumber\\
    &\!\!=\frac{2\pi\tilde{\lambda}_k}{1-e^{-\pi\tilde{\lambda}_k\left(R_{{\rm{max}},k}^2-R_{{\rm{min}},k}^2\right)}}\int_{r}^{\infty}x\nonumber\\
    &\;\;\times\prod_{j=1}^{K}\mathbb{P}\left[D_j>\left(\hat{P}_j\hat{G}_j\hat{B}_j\hat{E}_j\right)^{\frac{1}{\alpha_j}}x^{\frac{1}{\hat{\alpha}_j}}\Big|{\tilde \Phi}_k(\tilde{\mathcal{A}}_k)>0\right]{\rm{d}}x.
\end{align}
We can obtain the PDF of $r$ conditioned on that a typical user is associated with the $k$th tier and at least one BS is visible in the $k$th tier by differentiating the conditional CDF as
\begin{align}
    &f_k(r|{\tilde \Phi}_k(\tilde{\mathcal{A}}_k)>0,J=k) \nonumber \\ 
    &=\frac{1}{\mathbb{P}[J=k|{\tilde \Phi}_k(\tilde{\mathcal{A}}_k)>0]}\times\frac{2\pi\tilde{\lambda}_k}{1-e^{-\pi\tilde{\lambda}_k\left(R_{{\rm{max}},k}^2-R_{{\rm{min}},k}^2\right)}}\nonumber\\
    &\;\;\times r\prod_{j=1}^{K}\mathbb{P}\left[D_j>\left(\hat{P}_j\hat{G}_j\hat{B}_j\hat{E}_j\right)^{\frac{1}{\alpha_j}}r^{\frac{1}{\hat{\alpha}_j}}\Big|{\tilde \Phi}_k(\tilde{\mathcal{A}}_k)>0\right],
\end{align}
for $R_{{\rm{min}},k}<r<R_{{\rm{max}},k}$.

\section{Proof of Lemma \ref{Laplace of interference power plus noise}} \label{proof of laplace of interference power}
We first assume that the distance from the nearest BS of the $k$th tier to typical user is $r$. Therefore, the condition for this distance from typical user to nearest BS is also additionally required. As a result, the Laplace transform of the aggregated interference plus noise power conditioned on that the distance from a typical user to the nearest BS of the $k$th tier is $r$ and at least one BS exist in the annulus $\tilde{\mathcal{A}}_k$ is calculated as
\begin{align}
    &\mathcal{L}_{I_k+\hat{\sigma}_k^2|{\tilde \Phi}_k(\tilde{\mathcal{A}}_k)>0}(s) \nonumber \\
    &=\mathbb{E}[e^{-s(I_k+\hat{\sigma}_k^2)}|{\tilde \Phi}_k(\tilde{\mathcal{A}}_k)>0,\lVert{\bf \tilde s}_{k,1}-\tilde{\mathbf{u}}_1\rVert=r] \nonumber \\
    &=e^{-s\hat{\sigma}_k^2}\mathbb{E}[e^{-sI_k}|{\tilde \Phi}_k(\tilde{\mathcal{A}}_k)>0,\lVert{\bf \tilde s}_{k,1}-\tilde{\mathbf{u}}_1\rVert=r] \nonumber \\
    &\overset{(a)}{=}\exp\left(-s\hat{\sigma}_k^2-\tilde{\lambda}_k\int_{v\in\tilde{\mathcal{A}}_k/\tilde{\mathcal{A}}_k^r}\mathbb{E}\left[1-e^{-s\tilde{G}_k H_{k,i} v^{-\alpha_k}}\right]{\rm{d}}v\right) \nonumber \\
    &\overset{(b)}{=}\exp\Bigg(-s\hat{\sigma}_k^2-\tilde{\lambda}_k\int_{v\in\tilde{\mathcal{A}}_k/\tilde{\mathcal{A}}_k^r}1- \nonumber \\
    &\qquad\quad\;\frac{(2b_k m_k)^{m_k}(1+2b_k s\tilde{G}_k v^{-\alpha_k})^{m_k-1}}{\left[(2b_k m_k+\Omega_k)(1+2b_k s\tilde{G}_k v^{-\alpha_k})-\Omega_k\right]^{m_k}}{\rm{d}}v\Bigg) \nonumber \\
    &\overset{(c)}{=}\exp\Bigg(-s\hat{\sigma}_k^2-2\pi\tilde{\lambda}_k\int_{r}^{R_{{\rm{max}},k}}\Bigg[1- \nonumber \\
    &\quad\frac{(2b_k m_k)^{m_k}(1+2b_k s\tilde{G}_k v^{-\alpha_k})^{m_k-1}}{\left[(2b_k m_k+\Omega_k)(1+2b_k s\tilde{G}_k v^{-\alpha_k})-\Omega_k\right]^{m_k}}\Bigg]v{\rm{d}}v\Bigg).
\end{align}
where (a) follows from a Laplace functional of the PPP \cite{haenggi2009stochastic}, which is also known as the probability generating functional (PGFL) of the PPP \cite{westcott1972probability}, (b) is applied the moment generating function (MGF) of the shadowed-Rician fading \cite{abdi2003new}, and (c) is the conversion of the area corresponding to the surface area of the annulus into polar coordinates.

\section{Proof of Theorem \ref{theo:exact form of coverage}} \label{proof of Theorem 2}
We consider the conditional coverage probability according to all possible distances from a typical user to the nearest BS of the $k$th tier, i.e., take the expected value for the nearest BS distance $r$. Additionally, since the aggregated interference power of the $k$th tier $I_k$ is also a random variable, we also average over the aggregated interference power $I_k$. Namely, the conditional coverage probability is calculated as
\begin{align}
    &{\sf P}_{k}^{\sf cov}(T|{\tilde \Phi}_k(\tilde{\mathcal{A}}_k)>0,J=k)\nonumber\\
    &\!\!=\mathbb{E}\bigg[\mathbb{E}\bigg[\mathbb{P}\Big[{\rm{SINR}}_k>T\Big|{\tilde \Phi}_k(\tilde{\mathcal{A}}_k)>0,J=k,D_k=r\Big]\bigg]\bigg] \nonumber \\
    &\!\!=\mathbb{E}\bigg[\mathbb{E}\bigg[\mathbb{P}\Big[H_{k,1}\!>\!r^{\alpha_k}T (I_k+\hat{\sigma}_k^2)\Big|{\tilde \Phi}_k(\tilde{\mathcal{A}}_k)\!>\!0,J=k,D_k=r\Big]\bigg]\bigg], \label{eq:conditional coverage probability of annulus}
\end{align}
where $D_k=\lVert{\bf \tilde s}_{k,1}-\tilde{\mathbf{u}}_1\rVert$ is the distance from the nearest BS of the $k$th tier to the typical user. Using the CDF of the shadowed-Rician fading in \eqref{eq:CDF of SRF}, the CCDF of $H_{k,i}$ is obtained as
\begin{align}
    F_{H_{k,i}}^c(x)=1-\sum_{l=0}^{m_k-1}\frac{\zeta_k(l)}{(\beta_k-\delta_k)^{l+1}}\gamma(l+1,(\beta_k-\delta_k)x). \label{eq:CCDF of SRF}
\end{align}
The lower incomplete gamma function in \eqref{eq:CCDF of SRF} can be rewritten as
\begin{align}
    \gamma&(l+1,(\beta_k-\delta_k)x)=\Gamma(l+1)-\Gamma(l+1,(\beta_k-\delta_k)x) \nonumber \\
    &=\Gamma(l+1)\left(1-e^{-(\beta_k-\delta_k)x}\sum_{q=0}^{l}\frac{\left[(\beta_k-\delta_k)x\right]^q}{q!}\right).
\end{align}
Therefore, we can also rewrite the CCDF of $H_{k,i}$ in \eqref{eq:CCDF of SRF} as
\begin{equation}
    F_{H_{k,i}}^c(x)=1-\sum_{l=0}^{m_k-1}\frac{\zeta_k(l)\Gamma(l+1)}{(\beta_k-\delta_k)^{l+1}}\left(1-e^{-x}\sum_{q=0}^{l}\frac{x^q}{q!}\right).
\end{equation}
Using this CCDF of \eqref{eq:CCDF of SRF}, the conditional coverage probability in \eqref{eq:conditional coverage probability of annulus} can be specifically calculated as
\begin{align}
    &{\sf P}_{k}^{\sf cov}(T|{\tilde \Phi}_k(\tilde{\mathcal{A}}_k)>0,J=k) \nonumber \\
    &\!\!=\mathbb{E}\bigg[\mathbb{E}\bigg[\!F_{H_{k,i}}^c\!\left(r^{\alpha_k}T(I_k+\hat{\sigma}_k^2)\right)\biggr|{\tilde \Phi}_k(\tilde{\mathcal{A}}_k)>0, J=k,D_k=r\bigg]\bigg] \nonumber \\
    &\!\!\!\overset{(a)}{=}\mathbb{E}\Bigg[\!\int_{R_{{\rm{min}},k}}^{R_{{\rm{max}},k}}\!F_{H_{k,i}}^c\!\left(r^{\alpha_k}T(I_k+\hat{\sigma}_k^2)\right)\!f(r|{\tilde \Phi}_k(\tilde{\mathcal{A}}_k)\!>\!0,J=k){\rm{d}}r\Bigg] \nonumber \\
    &\!\!=\mathbb{E}\left[\int_{R_{{\rm{min}},k}}^{R_{{\rm{max}},k}}\left\{1-\sum_{l=0}^{m_k-1}\frac{\zeta_k(l)\Gamma(l+1)}{(\beta_k-\delta_k)^{l+1}}\Bigg(1-e^{-\nu_k(I_k+\hat{\sigma}_k^2)}\right.\right. \nonumber \\
    &\quad\times\left.\sum_{q=0}^{l}\frac{\left(\nu_k(I_k+\hat{\sigma}_k^2)\right)^q}{q!}\right)\Bigg\}f_k(r|{\tilde \Phi}_k(\tilde{\mathcal{A}}_k)>0,J=k){\rm{d}}r\Bigg] \nonumber \\
    &\!\!=\int_{R_{{\rm{min}},k}}^{R_{{\rm{max}},k}}\left[1-\sum_{l=0}^{m_k-1}\frac{\zeta_k(l)\Gamma(l+1)}{(\beta_k-\delta_k)^{l+1}}\Bigg(1-\sum_{q=0}^{l}\frac{\nu^q}{q!}\right. \nonumber \\
    &\!\quad\times\left.\mathbb{E}\Bigg[(I_k+\hat{\sigma}_k^2)^q e^{-\nu_k(I_k+\hat{\sigma}_k^2)}\Bigg]\right)\Bigg]f_k(r|{\tilde \Phi}_k(\tilde{\mathcal{A}}_k)>0,J=k){\rm{d}}r \nonumber \\
    &\!\!\!\overset{(b)}{=}\int_{R_{{\rm{min}},k}}^{R_{{\rm{max}},k}}\left[1-\sum_{l=0}^{m_k-1}\frac{\zeta_k(l)\Gamma(l+1)}{(\beta_k-\delta_k)^{l+1}}\Bigg(1-\sum_{q=0}^{l}\frac{\nu_k^q}{q!}(-1)^q\right. \nonumber \\
    &\;\times\frac{{\rm{d}}^q\mathcal{L}_{I_k+\hat{\sigma}_k^2|{\tilde \Phi}_k(\tilde{\mathcal{A}}_k)>0}(s)}{{\rm{d}}s^q}\Bigg|_{s=\nu_k}\Bigg)\Bigg]f_k(r|{\tilde \Phi}_k(\tilde{\mathcal{A}}_k)>0,J=k){\rm{d}}r, \label{eq:conditional coverage probability in detail}
\end{align}
where (a) is because the average is taken for $r$, and (b) comes from the derivative property of the Laplace transform with $\mathbb{E}\left[X^k e^{-sX}\right]=(-1)^k\frac{{\rm{d}}\mathcal{L}_X(s)}{{\rm{d}}s^k}$ and the Laplace transform of the aggregated interference power plus normalized noise variance is obtained in Lemma \ref{Laplace of interference power plus noise}.

\section{Proof of Theorem \ref{theo:approximation of coverage}} \label{Proof of Theorem 3}
We start with the lower and upper bounds of the lower incomplete gamma function $\gamma(l+1,(\beta_k-\delta_k)x)$ as
\begin{align}
    &\left[1-\exp\left(-\left(\Gamma(l+2)\right)^{-\frac{1}{l+1}}(\beta_k-\delta_k)x\right)\right]^{l+1} \nonumber \\
    &\qquad\quad\le\frac{1}{\Gamma(l+1)}\gamma(l+1,(\beta_k-\delta_k)x)\nonumber\\
    &\qquad\qquad\qquad\le\left[1-\exp\left(-(\beta_k-\delta_k)x\right)\right]^{l+1}.
\end{align}
These lower and upper bounds form another inequality as
\begin{align}
    &\left[1-\exp\left(-\left(\Gamma(l+2)\right)^{-\frac{1}{l+1}}(\beta_k-\delta_k)x\right)\right]^{l+1} \nonumber \\
    &\qquad\quad\le\left[1-\exp\left(-\kappa(\beta_k-\delta_k)x\right)\right]^{l+1}\nonumber\\
    &\qquad\qquad\qquad\le \left[1-\exp\left(-(\beta_k-\delta_k)x\right)\right]^{l+1}.
\end{align}
Depending on $\kappa$, the lower incomplete gamma function can be approximated as
\begin{equation}
    \!\!\!\gamma(l+1,(\beta_k-\delta_k)x)\!\approx\!\Gamma(l+1)\left[1-\exp\left(-\kappa(\beta_k-\delta_k)x\right)\right]^{l+1}.
\end{equation}
When $l=0$, $\kappa=1$ and the equality holds. However, when $l\ge 1$, $\kappa$ must be tuned to be as similar as possible.

Therefore, again using the CCDF of \eqref{eq:CCDF of SRF}, the conditional coverage probability in \eqref{eq:conditional coverage probability of annulus} can be calculated as
\begin{align}
    &{\sf P}_{k}^{\sf cov}(T|{\tilde \Phi}_k(\tilde{\mathcal{A}}_k)>0,J=k) \nonumber \\
    &\approx\mathbb{E}\left[\int_{R_{{\rm{min}},k}}^{R_{{\rm{max}},k}}\left\{1-\sum_{l=0}^{m_k-1}\frac{\zeta_k(l)\Gamma(l+1)}{(\beta_k-\delta_k)^{l+1}}\times\right.\right. \nonumber \\
    &\left[1-\exp\left(-\rho_k(I_k+\hat{\sigma}_k^2)\right)\right]^{l+1}\Bigg\}f_k(r|{\tilde \Phi}_k(\tilde{\mathcal{A}}_k)>0,J=k){\rm{d}}r\Bigg] \nonumber \\
    &\overset{(a)}{=}\int_{R_{{\rm{min}},k}}^{R_{{\rm{max}},k}}\left\{1-\sum_{l=0}^{m_k-1}\frac{\zeta_k(l)\Gamma(l+1)}{(\beta_k-\delta_k)^{l+1}}\right. \nonumber \\
    &\qquad\times\left.\mathbb{E}\left[\sum_{q=0}^{l+1}{l+1 \choose q}(-1)^q\exp\left(-\rho_k q(I_k+\hat{\sigma}_k^2)\right)\right]\right\} \nonumber \\
    &\qquad\qquad\qquad\qquad\qquad\times f_k(r|{\tilde \Phi}_k(\tilde{\mathcal{A}}_k)>0,J=k){\rm{d}}r \nonumber \\
    &\overset{(b)}{=}\int_{R_{{\rm{min}},k}}^{R_{{\rm{max}},k}}\Bigg[1-\sum_{l=0}^{m_k-1}\frac{\zeta_k(l)\Gamma(l+1)}{(\beta_k-\delta_k)^{l+1}}\sum_{q=0}^{l+1}{l+1 \choose q}(-1)^q \nonumber \\
    &\;\times\mathcal{L}_{I_k+\hat{\sigma}_k^2|{\tilde \Phi}_k(\tilde{\mathcal{A}}_k)>0}(\rho_k q)\Bigg]f_k(r|{\tilde \Phi}_k(\tilde{\mathcal{A}}_k)>0,J=k){\rm{d}}r, \label{eq:conditional coverage probability in detail}
\end{align}
where (a) is the result of binomial expansion, and (b) comes from the Laplace transform of the aggregated interference power plus normalized noise variance obtained in Lemma \ref{Laplace of interference power plus noise}.


\bibliographystyle{IEEEtran}
\bibliography{refs_all}


 





\end{document}